\newtheorem{theorem}{Theorem}[section]
\begin{document}
\thispagestyle{empty}
This work has been submitted to the IEEE for possible publication.  Copyright may be transferred without notice, after which this version may no longer be accessible.
\clearpage
\setcounter{page}{1}

\title{Optimizing IoT Energy Efficiency on Edge (EEE): a Cross-layer Design in a Cognitive Mesh Network}

\author{Jianqing Liu,
        Yawei Pang,
        Haichuan Ding,
        Ying Cai,
        Haixia Zhang,
        Yuguang Fang
\thanks{J. Liu is with the Department
of Electrical and Computer Engineering, University of Alabama in Huntsville, Huntsville,
AL 35899 USA e-mail: jianqing.liu@uah.edu}
\thanks{Y. Pang is with the Department of IoT, School of Computer and Software, Nanjing University of Information Science and Technology, Nanjing 210044, China e-mail: yaweipang@gmail.com}
\thanks{H. Ding is with the Department of Electrical Engineering and Computer Science, University of Michigan, Ann Arbor, MI 48109 USA e-mail: dhcbit@gmail.com}
\thanks{Y. Cai is with the Computer School, Beijing Information Science \&
Technology University, Beijing 100101, China e-mail: ycai@bistu.edu.cn}
\thanks{H. Zhang is with the School of Control Science and Engineering, Shandong University, Jinan 250100, China e-mail: haixia.zhang@sdu.edu.cn}
\thanks{Y. Fang is with the Department
of Electrical and Computer Engineering, University of Florida, Gainesville, FL 32611 USA e-mail: fang@ece.ufl.edu.}}
\maketitle

\begin{abstract}
Battery-powered wireless IoT devices are now widely seen in many critical applications. Given the limited battery capacity and inaccessibility to external power recharge, optimizing energy efficiency (EE) plays a vital role in prolonging the lifetime of these IoT devices. However, a sheer amount of existing works only focus on the EE design at the infrastructure level such as base stations (BSs) but with little attention to the EE design at the device level. In this paper, we propose a novel idea that aims to shift energy consumption to a grid-powered cognitive radio mesh network thus preserving energy of battery-powered devices. Under this line of thinking, we cast the design into a cross-layer optimization problem with an objective to maximize devices' energy efficiency. To solve this problem, we propose a parametric transformation technique to convert the original problem into a more tractable one. A baseline scheme is used to demonstrate the advantage of our design. We also carry out extensive simulations to exhibit the optimality of our proposed algorithms and the network performance under various settings.
\end{abstract}

\begin{IEEEkeywords}
Energy efficiency, Cognitive radio network, OFDM, Cross-layer optimization, Fractional programming.
\end{IEEEkeywords}

\IEEEpeerreviewmaketitle

\section{Introduction}
Over the last few years, the explosive growth of smart devices is accelerating the advent of Internet-of-Things (IoT). Among these IoT devices, a significant number of them run on wireless radio and are powered by battery \cite{iot17wireless,iot19battery}. It is simply because they are deployed in hard-to-access locations, infeasible to be hard wired, or just designed to be user friendly. Such use cases sometimes require battery-powered IoT devices to operate for a significant amount of time without battery replacement, 10 years for instance \cite{lu2017reaching}. To achieve this goal, a few low-power wireless technologies such as Bluetooth Low Energy (BLE) \cite{siekkinen2012low}, ZigBee (IEEE 802.15.4e) \cite{juc2016energy}, WiFi (IEEE 802.11ah) \cite{akeela2017design}, low-power long-range (LoRa) \cite{casals2017modeling} are developed with the objective of optimizing a device's radio interface for higher energy efficiency (e.g., via better power control and transmission scheduling). However, such optimization with energy efficiency as the objective usually comes with the sacrifice to other metrics. For instance, long duty-cycle incurs high latency \cite{palattella2013optimal}; low transmit power reduces data rate \cite{reynders2017power} and simplified radio even hampers security \cite{ryan2013bluetooth}. In light of this, alternative solutions to increasing IoT devices' energy efficiency are worthwhile to investigate.

It has been well recognized that the attempt in increasing energy efficiency is constrained by the spectrum efficiency \cite{chen2011fundamental}. The reason lies in the Shannon's capacity theorem, which reveals that link capacity increases only logarithmically with power but linearly with bandwidth, indicating that bandwidth could more effectively bring down the power consumption. With the observation that a large portion of licensed spectrum is not well utilized in certain areas \cite{yin2012mining}, one can harvest and opportunistically access under-utilized spectrum. As an enabling technology, cognitive radio (CR) \cite{haykin2005cognitive} promises to realize dynamic spectrum access and thus increase spectrum efficiency.

In this paper, we propose a novel idea - by allowing battery-powered IoT devices to shift their energy consumption to grid-powered CR-capable devices, IoT devices' energy efficiency can be increased. This energy-efficiency-on-edge (EEE) idea is motivated by the observation that IoT devices are more sensitive to energy consumption than grid-powered devices are. In other words, it is worthwhile to spend a bit more grid energy on grid-powered devices so as to somewhat, if not significantly, improve the energy efficiency of battery-power devices. To enable such ``energy shift'', we leverage a cognitive capacity harvesting network (CCHN) which was firstly proposed in our previous works \cite{ding2017cognitive} as shown in Fig.\ref{sys_mod}. The essence of this architecture is that the routers, also called CR routers, have CR capabilities and they can form a multi-hop mesh network to get closer to and help the end devices with no CR capabilities for data exchange, potentially saving energy and increasing spectrum efficiency. Under this structure, we intend to augment it with a centralized IoT system\footnote{In this work, a hypothetical model instead of a realistic system is used to convey the idea.} (e.g., LoRAWAN, LTE) and let the CCHN help battery-powered IoT devices to relay data, thus increasing their energy efficiency.

Specifically, we consider the uplink transmission where IoT devices can send data either directly to the BS/gateway or to CR routers. For the former case, it is a one-hop transmission while in the latter one, an IoT device is firstly connected to a CR router using its existing radio interface (e.g., LoRa, WiFi) and its traffic are then delivered to the BS via multi-hop transmissions using harvested bands. The rationale of studying this problem is that IoT devices could use lower transmission power to associate with closer CR routers but may not be able to always receive satisfactory service through the CCHN due to the uncertainty of the harvested bands; while they could obtain reliable throughput via direct connection to the BS but may need to apply higher transmission power. Obviously, there is a tradeoff between service quality (i.e., reliability and throughput) and power consumption. Therefore, we will investigate the energy efficient design in this respect, with the design dimension to be device association, uplink power control and channel allocation, and multi-hop scheduling and routing in the CCHN.

Towards this design objective, there are several technical challenges to be addressed. First, the weighted sum-of-ratios form of the objective function is non-convex, which makes the optimization problem difficult to tackle. Second, the availability of harvested spectrum is highly unpredictable and the usable bandwidth in this regard should be naturally modeled as a random variable, which results in a stochastic constraint and makes the problem intractable. Third, the association variable is in integer form and tightly coupled with other decision variables so solving the problem via conventional approaches is highly prohibitive especially when the network size is large. In light of these aforementioned challenges, we propose corresponding solution to each of them, which in turn demonstrates our technical contributions as follows:
\begin{itemize}
  \item We introduce auxiliary variables and transform the original objective function into a parametric subtractive form which bears the desired convexity property. Their equivalence in terms of finding the same solution is further proved.
  \item We reformulate the stochastic constraint of the availability of harvested spectrum as a chance constraint of $\Delta$-confidence level. Then, the feasible region of original optimization problem becomes a convex set.
  \item We address the integer programming part through a two-step procedure: relaxing and then rounding. To decouple the decision variables, we further apply the dual-based approach to make the original problem more tractable.
\end{itemize}

The rest of the paper is organized as follows. Section II introduces the most recent literature of this topic. Section III describes the system model. The problem formulation is outlined in Section IV. We propose solution algorithms in Section V and present the performance evaluation in Section VI. Finally, Section VII concludes the paper.

\section{Related Works}
IoT wireless access technology can be broadly classified into random access (or contention-based) and deterministic access (or connection-based) categories \cite{centenaro2017comparison}. Exemplary standard of the former is LTE-RACH \cite{polese2016m2m} while the latter could be TDMA-based (e.g., BLE) and OFDM-based (e.g., 802.11ah). There has been a flux of works on optimizing these standards to make them energy-efficient, but the drawbacks are also evident in the sense that other performance metrics such as latency, throughput and security could be hampered \cite{palattella2013optimal,reynders2017power,ryan2013bluetooth}. Since our work is not constrained to any existing standard, we will survey generic EE design in hypothetical wireless networks. Nevertheless, the number of related works is still significant so we limit the scope to the OFDM-based access systems and future networks like cognitive radio networks (CRNs) and the CRN-enabled networks.

The well-recognized EE design model is to maximize energy efficiency (defined as the ratio of rate to power consumption) under wireless resource constraints \cite{zappone2015energy}. For instance, in an OFDM-based radio access network, there are many related works to optimize EE by jointly considering power control and subcarrier allocation \cite{cheung2013achieving,xiong2016energy,he2014leakage,zarakovitis2016maximizing}. Cheung \emph{et al.} \cite{cheung2013achieving} focused on a multi-relay assisted OFDM cellular network and studied the EE maximization problem by formulating EE as the ratio of total network throughput to total power consumption at a BS. Xiong \emph{et al.} also targeted at the same wireless setting but investigated the EE maximization constrained on users' proportional rate fairness \cite{xiong2016energy}. However, these papers and their related ones bear two major shortcomings: (i) they modeled the network EE in a ``sum-to-sum'' form (instead of the ``sum-of-ratios'' form) which fails to capture each device's EE; (ii) they used the infrastructure's power consumption in calculating EE, which lacks emphasis on the battery-powered end devices. Unfortunately, there is a lack of study when it comes to addressing these two issues. Recent works \cite{he2014leakage,zarakovitis2016maximizing} nonetheless have some merits along this line. Zarakovitis \emph{et al.} \cite{zarakovitis2016maximizing} studied the EE design in a downlink OFDM cellular system by characterizing EE in weighted sum-of-ratios. They applied the Maclaurin series expansion to transform the objective into a tractable form and then solved it in polynomial time. He \emph{et al.} \cite{he2014leakage} focused on a multi-cell downlink OFDM cellular system with coordinated beamforming. They introduced auxiliary variables to transform the weighted sum-of-ratios objective into a parametric subtractive form, which afterwards became easier to address. These works adopted the ``sum-of-ratios'' to define the EE objective but still employed the infrastructure's rather than the device's power consumption.

The EE design in CRNs and CRN-enabled networks is still in its infancy. Amongst the limited works, Wang \emph{et al.} in \cite{wang2015energy} considered an OFDM-based CRNs and investigated the EE design by taking channel uncertainty into consideration. Xie \emph{et al.} \cite{xie2012energy} proposed a new cognitive cellular network architecture consisting of macrocells and femtocells. They utilized game theoretic approaches to investigate the energy efficient resource allocation. There are other similar works in this context \cite{bayhan2013scheduling,bedeer2015energy}, but they all bear two critical limitations: (i) end devices are assumed to have CR capabilities which in nature is not energy efficient due to the tedious spectrum sensing; (ii) the EE measurement is the ratio of overall rate to overall power consumption, which is in the ``sum-to-sum'' form. To cope with the first challenge, a  cognitive capacity harvesting network (CCHN) was proposed in \cite{ding2017cognitive} and its basics are discussed in Section II. Under this architecture, Ding \emph{et al.} \cite{ding2018session,ding2017smart} and Liu \emph{et al.} \cite{liu2016energy,liu2015energy} developed protocols to achieve higher throughput and energy efficiency for lightweight end devices.

In this work, we exploit the CCHN-enabled OFDM access system to maximize energy efficiency of battery-powered IoT devices. We characterize the network EE using the summation of end devices' EE, and model the problem under a novel idea - ``energy shift'' from battery-powered devices to grid-powered ones.
\section{System Model}
\subsection{Network Description}
In this paper, we consider a CCHN-augmented network which consists of a secondary service provider (SSP), a BS, multiple CR routers and end users\footnote{We use ``end users'' and ``IoT devices'' interchangeably in the following article to represent the same meaning.}, as shown in Fig.\ref{sys_mod}. Specifically, SSP is a wireless service provider which has its own licensed bands, typically called the ``basic bands'', for reliable control signalling, handling handovers and so on. SSP could also harvest spectrum bands from other operators via paradigms such as spectrum sensing or spectrum auction. As the centralized coordinator, SSP observes and collects network information (e.g., users' traffic demands, channel state information) in its coverage area and then performs network optimization (e.g., power control, channel allocation, link scheduling and routing) to determine the optimal approaches for service provisioning. CR routers are grid-powered devices with CR capabilities, and they form a mesh network that is capable of using the harvested bands to transmit data. BS has multiple radio interfaces and serves as the gateway to the Internet for CR routers. In this architecture, end users do not have to possess CR capabilities. CR routers could tune their radio interfaces to what end users  use to make connections. Due to the close proximity between CR routers and end users, the frequency reuse ratio and devices' energy efficiency are greatly enhanced. For more details of this architecture, interested readers are referred to \cite{ding2017cognitive}.
\begin{figure}[!htb]
  \begin{center}
  \includegraphics[width=3.2in]{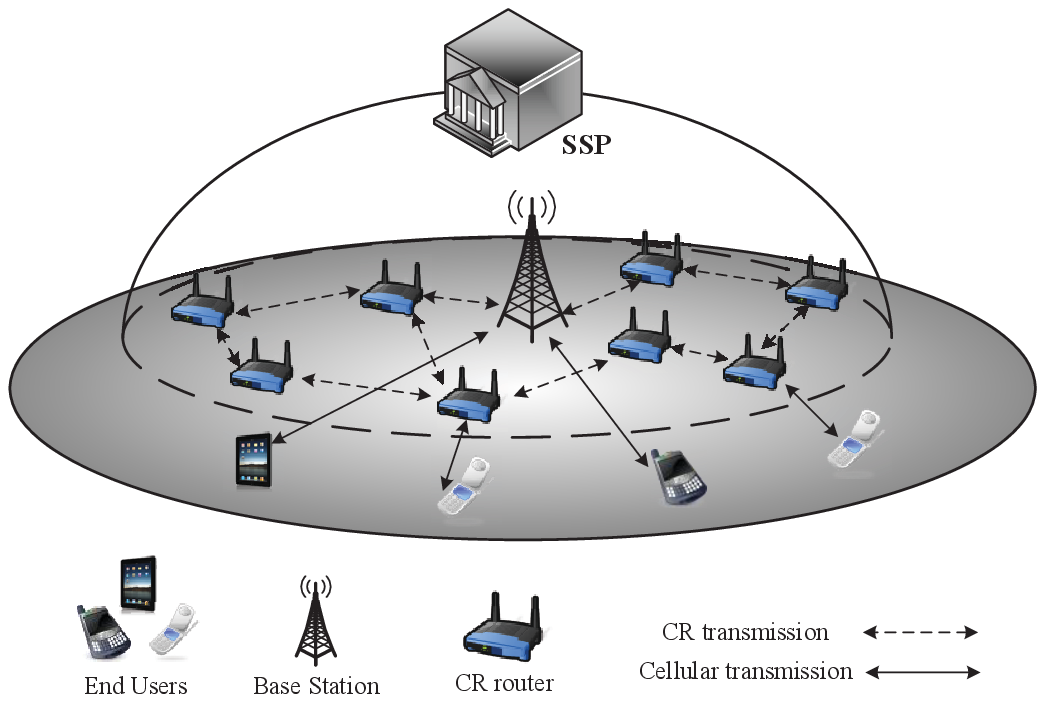}
  \end{center}
  \begin{center}
   \parbox{8cm}{\caption{The CCHN-augmented IoT Network.}\label{sys_mod}}
  \end{center}
\end{figure}

\subsection{Network model}
In this paper, we focus on uplink data transmission from IoT devices to a BS. As shown in Fig.\ref{sys_mod}, suppose a set of battery-powered devices/users, say, ${\cal U} = \{ 1,...,u,...,U\}$, each of which initializes a session whose destination is the BS denoted by $b$. We index the set of sessions as ${\cal L} = \{ {l_1},...,{l_u},...,{l_U}\}$ and let $s({l_u})$ and $d({l_u})$ denote the source (i.e., $s({l_u})=u$) and destination (i.e., $d({l_u})=b$) of session ${l_u} \in {\cal L}$, respectively. Also consider the network consisting of $K$ CR routers $\underline {\cal K}  = \{ 1,...,k,...,K\}$ and together with the BS $b$, we denote ${\cal K} = \underline {\cal K}  \cup \{ b\}$ as the set of grid-powered wireless infrastructures. Suppose the network applies single carrier-frequency division multiplexing access (SC-FDMA) for end users' uplink transmissions, where the network's basic band is divided into ${N_{tot}}$ number of orthogonal sub-channels which are shared among end users. Denote the harvested band as $m$ and it is allocated to the mesh network to form multi-hop transmissions.

Our model is applicable to a low mobility scenario. In such an environment, resource allocation can be conducted during the channel coherence time when channel is regarded as static. In light of this, we can just apply the line-of-sight (LOS) channel model instead of fast fading ones. Moreover, although signaling and computation overhead is incurred when solving the cross-layer optimization problem, the solution is applicable over a relatively long time scale due to users' low mobility and small variation of network parameters, which makes such overhead tolerable in the long run.

\section{Problem Formulation}
Given the system model described before, we target at the problem of uplink end-to-end data delivery from end users to the BS, where the user association, uplink power control, channel allocation, routing and link scheduling are jointly considered so as to maximize the network wide end users' energy efficiency.
\subsection{Uplink SC-FDMA}
In practical systems, similar to specifications in the 3GPP LTE standard, the uplink SC-FDMA in our design also implies restrictions on power and channel allocation \cite{myung2006single}. First, any sub-channel can only be assigned to one user, called \emph{exclusiveness}. Second, every user's allocated sub-channels must be continuous, called \emph{adjacency}. Third, user's transmit power should be identical on any allocated sub-channel in order to retain a low peak-to-average power ratio (PAPR).

Denote the indicator ${x_{u,k}}$ as the association variable, where ${x_{u,k}} = 1$ implies that user $u$ is associated with infrastructural node $k$, and ${x_{u,k}} = 0$ otherwise. Normally, the association is assumed to be performed in a large scale compared to the variation of channel so the fast fading is averaged out over the association time \cite{ye2013user}. We also consider a relatively low mobility environment, where the resource allocation is carried out during the channel coherence time so the channel can be regarded as static. Furthermore, in our network, to simplify the problem, we assume the ${N_{tot}}$ sub-channels are not reused so as to avoid strong interference.

\subsubsection{User Association}
Any end user can only be associated with the BS or a CR router, but not both. This physical constraint can be expressed as follows
\begin{align}
&\sum\limits_{k \in {\cal K}} {{x_{u,k}}} = 1, &\forall u \in {\cal U}, \label{ass_sum} \\
&{x_{u,k}} \in \{ 0,1\}, &\forall u \in {\cal U}, &\forall k \in {\cal K}. \label{ass_one}
\end{align}

\subsubsection{Sub-channel Allocation}
Suppose the end user $u$ is allocated with ${N_u}$ number of sub-channels which are contiguous in nature\footnote{Note that our work can be naturally extended to incorporate the case of finding the optimal sub-channel allocation pattern (i.e., a specific chunk of ${N_u}$ sub-channel collections) \cite{wong2009optimal,aijaz2014energy}, and we will leave it for the future work.}. Due to the property of \emph{exclusiveness}, the total allocated sub-channels cannot exceed ${N_{tot}}$, which is stated in below
\begin{equation}\label{cha_sum}
\sum\limits_{u \in {\cal U}} {{N_u}}  \le {N_{tot}}, {N_u} \in {\mathbb{Z}^ +}
\end{equation}
where ${\mathbb{Z}^ +}$ denotes the set of nonnegative integers.

\subsubsection{Link Capacity}
Given the sub-channel allocation, according to Shannon-Hartley theorem, the capacity of the link between an end user $u$ and the infrastructural node $j$ can be calculated as follows
\begin{equation}\label{link_cap}
{c_{u,j}} = {N_u}W\cdot{\log _2}(1 + \frac{{{p_u}|{h_{u,j}}{|^2}}}{{{N_u}W\cdot{N_0}}}), \forall u \in {\cal U}, \forall j \in {\cal K},
\end{equation}
where $W$ is the bandwidth of each sub-channel, ${N_0}$ is the power spectrum density (of unit $W/Hz$) of the Additive White Gaussian Noise (AWGN), and ${h_{u,k}}$ is the channel fading coefficient. Here, ${p_u}$ is denoted as the transmit power of the end user $u$ and as we know, ${p_u}$ must be evenly distributed across the allocated sub-channels to retain a low PAPR, which is given by $\frac{{{p_u}}}{{{N_u}}}$. Therefore, the link capacity is calculated as the sum of all allocated sub-channels' capacities, as shown in (\ref{link_cap}).

\subsubsection{Power Control}
Due to the hardware constraint, the transmit power of an end user cannot exceed its maximum allowable power level ${P_{u, \max}}$. Since end users have different power capabilities, ${P_{u, \max}}$ is a user-dependent variable. Moreover, we do not consider the power control of CR routers so their transmit powers are assumed to be fixed.

\subsection{The Cognitive Mesh Network}
When an end user is associated with a CR router, its traffic is delivered to the BS via multi-hop transmissions in the cognitive mesh network. In other words, the SSP allocates the harvested band $m$ to the mesh network and performs link scheduling and routing optimization to determine how to assist the end user $u$ to complete its session ${l_u}$ whose destination is the BS $b$. In what follows, we investigate the link scheduling and routing problem in the cognitive mesh network.

\subsubsection{Transmission Range and Interference Range}
Following the widely used model \cite{ding2018session}, we define the power propagation gain from the CR router $i$ ($\forall i \in \underline {\cal K}$) to another infrastructural node (either a CR router or the BS) $j$ ($\forall j \in {\cal K}\backslash i$) as ${g_{i,j}} = \zeta  \cdot d_{i,j}^{ - \gamma }$, where $\zeta$ is the antenna gain, ${d_{i,j}}$ refers to the Euclidean distance between $i$ and $j$, and $\gamma$ is the path loss exponent. Let assume that CR routers apply the same constant transmit power ${P_t}$ and define that the transmission is successful only when the received signal power exceeds a threshold $P_r^{th}$, i.e., ${P_t} \cdot {g_{i,j}} \ge P_r^{th}$. Then, we can obtain the transmission range of CR router $i$ as $R_i^T = {({P_t} \cdot \zeta /P_r^{th})^{1/\gamma }}$. Accordingly, we define the set of infrastructural nodes being in the transmission range of the CR router $i$ ($\forall i \in \underline {\cal K}$) as
\begin{equation}\label{tx_nb}
{{\cal T}_i} = \{ j \in {\cal K}|{d_{i,j}} \le R_i^T,j \ne i\}.
\end{equation}

On the other hand, to efficiently use harvested bands, the SSP should ensure the transmissions over different links do not conflict with each other. In light of this, we define the interference range in a similar way as before. Suppose the received interference can be ignored only when the received power is less than a threshold $P_I^{th}$, i.e., ${P_t} \cdot {g_{i,j}} < P_I^{th}$. Therefore, the interference range of the CR router $i$ ($\forall i \in \underline {\cal K}$) can be obtained as $R_i^I = {({P_t} \cdot \zeta /P_I^{th})^{1/\gamma }}$. Accordingly, the set of infrastructural nodes being in the interference range of the CR router $i$ ($\forall i \in \underline {\cal K}$) is defined as
\begin{equation}\label{int_nb}
{{\cal I}_i} = \{ j \in {\cal K}|{d_{i,j}} \le R_i^I,j \ne i\}.
\end{equation}

\subsubsection{Conflict Graph and Independent Sets \cite{ding2018session}}
Given the prior definition of the interference range, we can claim that two communication links conflict if the receiver of one link is within the interference range of the transmitter of the other link. A conflict graph $G = (V,E)$ is used to characterize the interfering relationship among different infrastructural links. Specifically, each vertex $v$ indicates a transmission link $(i,j)$ ($\forall i \in \underline {\cal K}$, $\forall j \in {\cal K}\backslash i$) and two links are said to be conflicted if there is an edge $e$ connecting the two corresponding vertices.

With this conflict graph being created, we can define an independent set (IS), which consists of a set of vertices $I \subseteq V$ and any two of them do not share an edge. In this case, all the transmission links in an IS do not interfere with each other and thus can be carried out successfully at the same time. If adding one more vertex into the IS $I$ results in a non-independent one, the set $I$ is called the maximal independent set (MIS). We can collect all the MISs of the conflict graph in a set ${\cal Q} = \{ {I_1},...,{I_q},...,{I_Q}\}$, where $Q$ represents the total number of MISs, i.e., $Q = |{\cal Q}|$.

\subsubsection{Link Scheduling}
In this paper, we consider different MISs are scheduled with certain time shares (out of unit time) so that the links within each MIS can carry out the transmissions simultaneously. From our previous discussion, we know only one of the MISs can be active at one time instance and we denote the time share allocated to the MIS ${I_q}$ as ${\lambda _q}$. Therefore, we have to satisfy the following constraint
\begin{align}
&\sum\limits_{q = 1}^Q {{\lambda _q}}  \le 1, \label{lnk_sum}\\
&0 \le {\lambda _q} \le 1,\forall q \in \{ 1,...,Q\}. \label{lnk_one}
\end{align}

On the other hand, the link capacity of the link $(i,j)$ can be obtained based on Shannon-Hartley theorem, which is
\begin{equation}\label{mesh_cap}
{c_{i,j}} = {W_m} \cdot {\log _2}(1 + \frac{{{P_t} \cdot {g_{i,j}}}}{{{W_m} \cdot {N_0}}}), \forall i \in \underline {\cal K}, \forall j \in {\cal K}\backslash i,
\end{equation}
where ${W_m}$ is the bandwidth of the harvested band. According to the link scheduling, the actual data rate over the link $(i,j)$ could be 0 if $(i,j)$ is not scheduled at a time instance. In other words, we use ${r_{i,j}}({I_q})$ to represent the achieved data rate over the link $(i,j)$ when ${I_q}$ is scheduled, where ${r_{i,j}}({I_q}) = {c_{i,j}}$ if $(i,j) \in {I_q}$ and 0 otherwise. Considering the link $(i,j)$ could exist in all the MISs, the total achieved data rate over the link $(i,j)$ can be expressed as
\begin{equation}\label{sch_cap}
{R_{i,j}} = \sum\limits_{q = 1}^Q {{\lambda _q}{r_{i,j}}({I_q})}.
\end{equation}

\subsubsection{Flow Routing}
In this paper, we consider the network level end-to-end (from users to the BS) service provisioning. Suppose the end user $u$ initiates a session ${l_u}$, the SSP should determine whether to support it by associating $u$ directly to the BS through one-hop transmission or connecting $u$ to the cognitive mesh network and then arriving at the BS via multi-hop transmissions. Here denote ${f_{i,j}}({l_u})$ as the supported flow rate for the session ${l_u}$ over the link $(i,j)$ at the network level. If node $i$ is the source of session ${l_u}$, which is the end user $u$ (i.e., $s({l_u}) = u$), we have the following constraints
\begin{equation}\label{flow_u_in}
\sum\limits_{j \in \{ j|u \in {{\cal T}_j}\} } {{f_{j,u}}({l_u})} = 0,
\end{equation}
\begin{equation}\label{flow_u_out}
{f_{u,j}}({l_u}) \cdot {x_{u,j}} = r({l_u}).
\end{equation}
The constraint (\ref{flow_u_in}) means that the incoming flow rate of any session at the source node is zero since the end user is the initiator of the session. The constraint (\ref{flow_u_out}) reflects the first hop from the end user $u$ to an infrastructural node $j \in {\cal K}$, where $r({l_u})$ signifies the achievable data rate of user $u$. Clearly, the association variable is coupled with flow rate in (\ref{flow_u_out}), implying that there only exists one wireless link from the end user $u$ to one of the infrastructural nodes to support $u's$ data rate. Besides, the flow rate on a link should be constrained by the link capacity according to (\ref{link_cap}), which is expressed as
\begin{equation}\label{flow_u_bnd}
0 \le {f_{u,j}}({l_u}) \le {c_{u,j}}.
\end{equation}

For any infrastructural node $i \in \underline {\cal K}$, which is the CR router, the flow conservation law (FCL) implies that for any session ${l_u}$, the total flow into $i$ must be equal to the total flow out of $i$. This can be expressed as
\begin{equation}\label{fcl}
\sum\limits_{j \in \{ j|i \in {{\cal T}_j}\} } {{f_{j,i}}({l_u})} + {f_{u,i}}({l_u}) \cdot {x_{u,i}} = \sum\limits_{k \in {{\cal T}_i}} {{f_{i,k}}({l_u})}.
\end{equation}
Clearly, if the CR router $i$ is directly associated with end user $u$, constraint (\ref{fcl}) could be rewritten as $r({l_u}) = \sum\limits_{k \in {{\cal T}_i}} {{f_{i,k}}({l_u})}$ according to (\ref{flow_u_out}); whereas if the CR router $i$ is the intermediate infrastructural node to support ${l_u}$, constraint (\ref{fcl}) would be equivalent to $\sum\limits_{j \in \{ j|i \in {{\cal T}_j}\} } {{f_{j,i}}({l_u})} = \sum\limits_{k \in {{\cal T}_i}} {{f_{i,k}}({l_u})}$.

Moreover, all the flows in the cognitive mesh network are completed at the BS, which means that the BS is the common destination, i.e., $d({l_u}) = b, \forall {l_u} \in {\cal L}$. Thus, we have another constraints for the destination node $b$ described as follows:
\begin{equation}\label{flow_b_out}
\sum\limits_{j \in {{\cal T}_b}} {{f_{b,j}}({l_u})}  = 0,
\end{equation}
\begin{equation}\label{flow_b_in}
{f_{u,b}}({l_u}) \cdot {x_{u,b}} + \sum\limits_{j \in \{ j|b \in {{\cal T}_j}\} } {{f_{j,b}}({l_u})}  = r({l_u}).
\end{equation}
Note that if the end user $u$ is directly connected to the BS (i.e., ${x_{u,b}}=1$), (\ref{flow_b_in}) becomes $\sum\limits_{j \in \{ j|b \in {{\cal T}_j}\} } {{f_{j,b}}({l_u})} = 0$, indicating that session ${l_u}$ is not supported through the CR routers. Instead, if ${x_{u,b}}=0$, meaning user $u$ is associated with the cognitive mesh network, (\ref{flow_b_in}) could be rewritten as $\sum\limits_{j \in \{ j|b \in {{\cal T}_j}\} } {{f_{j,b}}({l_u})}  = r({l_u})$.

Besides, for the link from one CR router $i \in \underline {\cal K}$ to another infrastructural node $j \in {{\cal T}_i}$, the total flow rate on that link (by aggregating all the rates of supported sessions) should not exceed the link capacity. From our previous discussion, we know that the link capacity is dependent on the scheduled time share on that link, which is described as (\ref{sch_cap}). Thus, we have the following constraint
\begin{equation}\label{flow_mesh_bnd}
0 \le \sum\limits_{{l_u} \in {\cal L}} {{f_{i,j}}({l_u})}  \le {R_{i,j}}.
\end{equation}

\subsection{User-centric Network-wide Energy Efficiency Optimization}
Our work aims to maximize the user-centric network-wide (i.e., the weighted sum of end users') energy efficiency by considering user diversity in power capability (e.g., residual energy, maximal allowable power). In light of it, we apply the ratio-based EE model, where the EE is measured in bit/s/Joule and defined as the ratio of data rate to power consumption which in this work are both with respect to the end users. Thus, we coin it as the user-centric EE metric, in contrast to the previous works that applied BS's power consumption in the EE definition \cite{he2014leakage,zarakovitis2016maximizing}. Furthermore, contrary to the conventional EE definition as the ratio of the system sum rate to the sum power consumption \cite{cheung2013achieving,ng2012energy}, our user-centric network-wide EE measures the weighted sum of end users' EE, such that the heterogeneous EE requirements from different users of various power capabilities can be investigated.

Before presenting the optimization problem, we first define the power consumption model of end user $u$ as $\eta {p_u} + {P_c}$, where $\eta$ is the efficiency of the transmit power amplifier when it operates in the linear region, whereas ${P_c}$ is the circuitry power dissipated in all other circuit blocks (e.g., mixer, oscillator, DAC and etc.) which is independent of the transmit power ${p_u}$ and normally a constant value. Then, the EE for the end user $u$ can be obtained as $\frac{{r({l_u})}}{{\eta {p_u} + {P_c}}}$, where $r({l_u})$ denotes its achieved data rate according to (\ref{flow_u_out}). Finally, we introduce a weighting factor ${\omega _u}$ associated with user $u's$ EE, which provides a means for service differentiation as well as fairness. Particularly, the weights could be determined inversely proportional to users' residual energy so that less power capable users are allocated with higher EE priorities.

By considering the user association, power control, channel allocation, routing and link scheduling constraints introduced previously, we can thus formulate the following optimization problem to achieve the maximal \underline{U}ser-centric \underline{N}etwork-wide \underline{EE} (UNEE-Max)
\begin{equation}\label{unee_max}
\begin{aligned}
& {\text{Max}}
& & \sum\limits_{u \in {\cal U}} {{\omega _u}\frac{{r({l_u})}}{{\eta {p_u} + {P_c}}}} \\
& \text{s.t.}
& & (\ref{ass_sum})\sim(\ref{cha_sum}), (\ref{lnk_sum})\sim(\ref{lnk_one}), (\ref{flow_u_in})\sim(\ref{flow_mesh_bnd}); \\
&&& 0 \le {f_{i,j}}({l_u}), \forall {l_u} \in {\cal L}, \forall i \in {\cal K}, \forall j \in {\cal K}\backslash i; \\
&&& 0 \le {p_u} \le {P_{u,\max }}, \forall u \in {\cal U}
\end{aligned}
\end{equation}
where ${x_{u,k}}$, ${N_u}$, ${p_u}$, ${f_{u,j}}({l_u})$, ${f_{i,j}}({l_u})$ and ${\lambda_q}$ are optimization decision variables. Clearly, UNEE-Max is a cross-layer optimization problem involving coupled variables from the physical layer to the network layer. In the next section, we elaborate several difficulties in addressing UNEE-Max problem and introduce techniques to solve it accordingly.

\section{Overview of The UNEE-Max Problem}
\subsection{Complexity of The UNEE-Max}
We first highlight several key difficulties in solving the UNEE-Max problem.
\subsubsection{NP-completeness for searching all MISs}
Under constraint (\ref{lnk_sum}), we need to search all the MISs for link scheduling. However, finding all the MISs in a conflict graph $G = (V,E)$ is NP-complete, which is the common obstacle encountered in multi-hop wireless networks \cite{li2010multi}. Although we can apply brute-force search when the size of $G = (V,E)$ is small, it is highly prohibitive when $G$ becomes large. Therefore, it requires a cost-effective approach to find MISs so as to make the problem tractable.
\subsubsection{Uncertainty of the harvested band}
In CRNs, SUs are allowed to access PUs' spectrum bands only when these bands are not occupied by PUs. SUs must immediately evacuate when PUs reclaim the spectrum. In practice, the availability of these harvested bands is highly unpredictable due to the uncertainty of PUs' activity and SSP's statistical inference model (i.e., false alarm / miss detection probabilities) \cite{yucek2009survey}. Therefore, the bandwidth ${W_m}$ of the harvested band (defined in (\ref{mesh_cap})) is a random variable, whose probability distribution could be derived from statistical characteristics of these PUs' bands from some observations and experiments \cite{mchenry2006chicago}. However, the randomness of ${W_m}$ makes (\ref{flow_mesh_bnd}) a stochastic constraint, which causes the feasible region of UNEE-Max to be both random and nonconvex.
\subsubsection{Combinatorial nature of user association}
The indicator variable ${x_{u,k}}$ in constraints (\ref{ass_sum}) and (\ref{ass_one}) enforces unique association, which makes the problem combinatorial. Although the classical branch-and-bound approach can be applied to solve general integer programming problems, due to the tight coupling between the association and the resource allocation (i.e., power control, link scheduling and routing) in UNEE-Max, it is difficult to solve using traditional approaches.
\subsubsection{Nonconcavity of objective function}
The objective function in the UNEE-Max problem is in the form of weighted sum of linear fractional functions (WSoLFF), which is generally nonconcave \cite{freund2001solving}. An immediate consequence is that the powerful tools from the convex optimization theory do not apply to the UNEE-Max, and the KKT conditions are only necessary conditions for optimality \cite{borwein2010convex}. Therefore, we need to transform the UNEE-Max to a certain form from which approximate solution to the UNEE-Max can be found.

\subsection{The UNEE-Max Relaxation Algorithm}
After outlining the difficulties in solving UNEE-Max problem, we introduce the relaxation or transformation techniques to make the UNEE-Max tractable.
\subsubsection{Critical MIS set}
Although there exists exponentially many MISs in a conflict graph, Li \emph{et. al.} \cite{li2010multi} proved that only a small portion of MISs, termed as \emph{critical MIS set}, can be scheduled in the optimal resource allocation. Instead of searching all MISs, we thus apply the SIO-based approach proposed in \cite{li2010multi,liu2016dafee} to return the critical MIS set ${{\cal Q}^{'}} = \{ {I_1},...,{I_q},...,{I_{{Q^{'}}}}\}$ in polynomial time, where ${{\cal Q}^{'}} \subseteq {\cal Q}$. Therefore, we can replace $Q$ with $Q^{'}$ in constraint (\ref{ass_sum}), and (\ref{flow_mesh_bnd}) to make the UNEE-Max problem more tractable. Note that SIO-based approach may only give a fraction of $Q^{'}$ in a limited searching time leading to a loss in solution optimality. In light of this, we could deliberately allow a longer searching time as the SIO-based approach can be run offline.
\subsubsection{$\Delta$-confidence level}
To address the stochastic constraint (\ref{flow_mesh_bnd}), inspired by the concept of value at risk (VaR) in \cite{holton2003value}, we reformulate it as a chance constraint of $\Delta$-confidence level represented as follows
\begin{equation*}
\Pr \left[ {0 \le \sum\limits_{{l_u} \in {\cal L}} {{f_{i,j}}({l_u})}  \le \sum\limits_{q = 1}^Q {{\lambda _q}{W_m}{{\log }_2}(1 + \frac{{{P_t} \cdot {g_{i,j}}}}{{{W_m}{N_0}}})} } \right] \ge \Delta,
\end{equation*}
where $\Delta  \in [0,1]$ indicates the confidence level for stochastic constraint (\ref{flow_mesh_bnd}) to be satisfied and $(i,j) \in {I_q}$. Suppose ${F_{{W_m}}}( \cdot )$ represent the cumulative distribution function (CDF) of random variable (r.v.) ${W_m}$. We could then obtain the ${F_{{c_{i,j}}}}( \cdot )$ as the CDF for the r.v. ${c_{i,j}} = {W_m}{\log _2}(1 + \frac{{{P_t} \cdot {g_{i,j}}}}{{{W_m}{N_0}}})$, which is the link capacity of $(i,j)$. Thus, by integrating the critical MISs, the above inequality could be reformulated as
\begin{equation}\label{flow_mesh_bnd_new}
0 \le \sum\limits_{{l_u} \in {\cal L}} {{f_{i,j}}({l_u})}  \le \sum\limits_{q = 1}^{{Q^{'}}} {{\lambda _q}F_{{c_{i,j}}}^{ - 1}(1 - \Delta )}.
\end{equation}

By replacing (\ref{flow_mesh_bnd}) with (\ref{flow_mesh_bnd_new}), the original stochastic constraint is converted to a linear inequality constraint in ${f_{i,j}}({l_u})$ and ${\lambda _q}$.
\subsubsection{Integer relaxation and rounding}
In the first phase, we assume that end users can be associated with the BS and CR routers at the same time. In other words, we relax the integer association variable ${x_{u,k}}$ to the continuous domain of $[0,1]$. Under this assumption, we also introduce the sub-channel auxiliary variable ${N_{u,k}} = {N_u} \cdot {x_{u,k}}$, where ${N_{u,k}} \in \mathbb{R}^{+}$ and $\mathbb{R}^{+}$ represents set of all nonnegative numbers, the power auxiliary variable ${p_{u,k}} = {p_u} \cdot {x_{u,k}}$ and the flow auxiliary variable ${\widetilde f_{u,k}}({l_u}) = {f_{u,k}}({l_u}) \cdot {x_{u,j}}$, so that $\sum\limits_{k \in {\cal K}} {{N_{u,k}}}  = {N_u}$, $\sum\limits_{k \in {\cal K}} {{p_{u,k}}}  = {p_u}$, $\sum\limits_{k \in {\cal K}} {{{\widetilde f}_{u,j}}({l_u})}  = r({l_u})$. Therefore, we can eliminate association constraint (\ref{ass_sum}) and (\ref{ass_one}) and rewrite the UNEE-Max problem as a Relaxed-UNEE-Max problem which is described as follows
\begin{equation}\label{relaxed-unee-max}
\begin{aligned}
& {\text{Max}}
& & \sum\limits_{u \in {\cal U}} {{\omega _u}\frac{{\sum\limits_{k \in {\cal K}} {{{\widetilde f}_{u,k}}({l_u})} }}{{\eta \sum\limits_{k \in {\cal K}} {{p_{u,k}}}  + {P_c}}}} \\
& \text{s.t.}
& & 0 \le \sum\limits_{u \in {\cal U}} {\sum\limits_{k \in {\cal K}} {{N_{u,k}}} }  \le {N_{tot}}, {N_{u,k}} \in \mathbb{R}^{+}; \\
&&& 0 \le \sum\limits_{k \in {\cal K}} {{p_{u,k}}}  \le {P_{u,\max }}, \forall u; {p_{u,k}} \in \mathbb{R}^{+}; \\
&&& 0 \le {\widetilde f_{u,k}}({l_u}) \le {N_{u,k}}W{\log _2}(1 + \frac{{{p_{u,k}}|{h_{u,k}}{|^2}}}{{{N_{u,k}}W{N_0}}}), \forall u, \forall k; \\
&&& \sum\limits_{j \in \{ j|i \in {{\cal T}_j}\} } {{f_{j,i}}({l_u})}  + {\widetilde f_{u,i}}({l_u}) = \sum\limits_{k \in {{\cal T}_i}} {{f_{i,k}}({l_u})}, \forall u; \\
&&& \sum\limits_{j \in \{ j|b \in {{\cal T}_j}\} } {{f_{j,b}}({l_u})}  = \sum\limits_{k \in \underline {\cal K} } {{{\widetilde f}_{u,k}}({l_u})}, \forall u; \\
&&& (\ref{lnk_sum})\sim(\ref{lnk_one}), (\ref{flow_u_in}), (\ref{flow_b_out}), (\ref{flow_mesh_bnd_new}); \\
&&& 0 \le {f_{i,j}}({l_u}), \forall {l_u} \in {\cal L}, \forall i \in {\cal K}, \forall j \in {\cal K}\backslash i.
\end{aligned}
\end{equation}

In the second phase, we develop a rounding scheme, as what will be discussed in Section VII, to convert the output of the Relaxed-UNEE-Max problem into a feasible value that satisfies the constraints of original problem UNEE-Max in (\ref{unee_max}).

\subsubsection{Parametric subtractive transformation}
It is clear that the constraints in the Relaxed-UNEE-Max problem (\ref{relaxed-unee-max}) form a convex feasible set $\bm{{\cal X}}$ w.r.t. the optimization variable set $(\bm{p}, \bm{N}, \bm{\widetilde f}, \bm{f}, \bm{\lambda}) \in \bm{{\cal X}}$. \footnote{For the sake of brevity, we define the vectors of optimization variables as $\bm{p} = \{ {p_{u,k}}\}$, $\bm{N} = \{ {N_{u,k}}\}$, $\bm{\widetilde f} = \{ {\widetilde f_{u,k}}({l_u})\}$, $\bm{f} = \{ {f_{i,j}}({l_u})\}$ and $\bm{\lambda}  = \{ {\lambda _q}\}$.} However, it is still challenging due to the sum-of-ratio form in the objective \cite{schaible2003fractional}. To overcome this difficulty, we firstly transform the objective function in (\ref{relaxed-unee-max}) into an intermediate form by introducing an auxiliary variable $\bm{\alpha} = \{ \alpha ,...,{\alpha _U}\}$ and reformulate the Relaxed-UNEE-Max problem as
\begin{equation}\label{int-relaxed-unee-max}
\begin{aligned}
& {\text{Max}}
& & \sum\limits_{u \in {\cal U}} {{\omega_u}{\alpha_u}} \\
& \text{s.t.}
& & \frac{{\sum\limits_{k \in {\cal K}} {{{\widetilde f}_{u,k}}({l_u})} }}{{\eta \sum\limits_{k \in {\cal K}} {{p_{u,k}}}  + {P_c}}} \ge {\alpha _u}, \forall u; \\
&&& (\bm{p}, \bm{N}, \bm{\widetilde f}, \bm{f}, \bm{\lambda}) \in \bm{{\cal X}}.
\end{aligned}
\end{equation}
Although the objective is an affine function w.r.t. $\bm{\alpha}$, problem (\ref{int-relaxed-unee-max}) is not a convex optimization yet due to the fractional constraint. Thus, we further convert (\ref{int-relaxed-unee-max}) into a parametric subtractive form and show in the following theorem its equivalence to the weighted sum maximization problem (\ref{int-relaxed-unee-max}) with fractional constraint.

\begin{theorem}\label{theory}
Suppose $(\bm{p^{*}}, \bm{N^{*}}, \bm{\widetilde f^{*}}, \bm{f^{*}}, \bm{\lambda^{*}}, \bm{\alpha^{*}})$ is the solution to problem (\ref{int-relaxed-unee-max}), there exist $\bm{\beta^{*}}$ such that for the parametric variables $\bm{\alpha}=\bm{\alpha^{*}}$ and $\bm{\beta}=\bm{\beta^{*}}$, $(\bm{p^{*}}, \bm{N^{*}}, \bm{\widetilde f^{*}}, \bm{f^{*}}, \bm{\lambda^{*}})$ satisfies the KKT conditions of the following problem
\begin{equation}\label{p-relaxed-unee-max}
\begin{aligned}
& {\text{Max}}
& & \sum\limits_{u \in {\cal U}} {{\beta _u}\left[ {\sum\limits_{k \in {\cal K}} {{{\widetilde f}_{u,k}}({l_u})}  - {\alpha _u}(\eta \sum\limits_{k \in {\cal K}} {{p_{u,k}}}  + {P_c})} \right]} \\
& \text{s.t.}
& & (\bm{p}, \bm{N}, \bm{\widetilde f}, \bm{f}, \bm{\lambda}) \in \bm{{\cal X}}.
\end{aligned}
\end{equation}
Also, the following system equations hold for the parametric variables ($\bm{\alpha^{*}}, \bm{\beta^{*}}$) and the tuple $(\bm{p^{*}}, \bm{N^{*}}, \bm{\widetilde f^{*}}, \bm{f^{*}}, \bm{\lambda^{*}})$:
\begin{equation}\label{parameter}
\left\{
\begin{aligned}
&{\alpha _u} = \frac{{\sum\limits_{k \in {\cal K}} {{{\widetilde f}_{u,k}}({l_u})} }}{{\eta \sum\limits_{k \in {\cal K}} {{p_{u,k}}} + {P_c}}} \\
&{\beta _u} = \frac{{{\omega _u}}}{{\eta \sum\limits_{k \in {\cal K}} {{p_{u,k}}}  + {P_c}}}.
\end{aligned}
\right.
\end{equation}

On the contrary, if $(\bm{p^{*}}, \bm{N^{*}}, \bm{\widetilde f^{*}}, \bm{f^{*}}, \bm{\lambda^{*}})$ is the solution to problem (\ref{p-relaxed-unee-max}) while (\ref{parameter}) system equations are met for $\bm{\alpha}=\bm{\alpha^{*}}$ and $\bm{\beta}=\bm{\beta^{*}}$, then $(\bm{p^{*}}, \bm{N^{*}}, \bm{\widetilde f^{*}}, \bm{f^{*}}, \bm{\lambda^{*}}, \bm{\alpha^{*}})$ satisfies KKT conditions for problem (\ref{int-relaxed-unee-max}), where $\bm{\beta}=\bm{\beta^{*}}$ is the Lagrange multiplier for fractional constraint in (\ref{int-relaxed-unee-max}).
\end{theorem}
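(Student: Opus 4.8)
The plan is to line up the Karush--Kuhn--Tucker (KKT) systems of the intermediate problem (\ref{int-relaxed-unee-max}) and the parametric problem (\ref{p-relaxed-unee-max}) and to show that, once the coupling equations (\ref{parameter}) are enforced, the two systems become identical in the shared block of variables $(\bm{p}, \bm{N}, \bm{\widetilde f}, \bm{f}, \bm{\lambda})$. First I would rewrite the fractional constraint of (\ref{int-relaxed-unee-max}) in subtractive form. Since $\eta \sum_{k} p_{u,k} + P_c > 0$ for every $u$, the inequality $\frac{\sum_{k} \widetilde f_{u,k}(l_u)}{\eta \sum_{k} p_{u,k} + P_c} \ge \alpha_u$ is equivalent to $g_u := \sum_{k} \widetilde f_{u,k}(l_u) - \alpha_u (\eta \sum_{k} p_{u,k} + P_c) \ge 0$. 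Assigning a multiplier $\beta_u \ge 0$ to each $g_u \ge 0$ and a vector $\bm{\mu}$ to the convex constraints defining $\bm{{\cal X}}$, the Lagrangian of (\ref{int-relaxed-unee-max}) is $L = \sum_{u} \omega_u \alpha_u + \sum_{u} \beta_u g_u$ together with the Lagrangian terms associated with $\bm{{\cal X}}$.

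For the forward direction, I start from an optimum of (\ref{int-relaxed-unee-max}) and first argue that every fractional constraint is \emph{active}: as $\omega_u > 0$ and $\alpha_u$ enters only the objective and its own constraint $g_u \ge 0$, any slack $g_u > 0$ would let me raise $\alpha_u$, strictly increasing $\sum_u \omega_u \alpha_u$ without disturbing the other constraints --- contradicting optimality. Hence $g_u = 0$, which is the first line of (\ref{parameter}). Stationarity of $L$ in $\alpha_u$ then reads $\omega_u - \beta_u(\eta \sum_{k} p_{u,k} + P_c) = 0$ (the $\bm{{\cal X}}$ terms do not involve $\alpha_u$), giving the second line of (\ref{parameter}) and forcing $\beta_u > 0$. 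With $\bm{\alpha}$ and $\bm{\beta}$ frozen at these values, the remaining stationarity equations --- those in $(\bm{p}, \bm{N}, \bm{\widetilde f}, \bm{f}, \bm{\lambda})$ --- no longer feel the constant $\sum_u \omega_u \alpha_u$, so they collapse to the stationarity of $\sum_u \beta_u g_u$ plus the $\bm{{\cal X}}$ terms, which is exactly the KKT stationarity of the objective in (\ref{p-relaxed-unee-max}). Primal feasibility and complementary slackness for the $\bm{{\cal X}}$ constraints transfer unchanged, so $(\bm{p^*}, \bm{N^*}, \bm{\widetilde f^*}, \bm{f^*}, \bm{\lambda^*})$ is a KKT point of (\ref{p-relaxed-unee-max}).

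For the converse, I reverse the same bookkeeping. Given a KKT point of (\ref{p-relaxed-unee-max}) at which (\ref{parameter}) holds, the first line of (\ref{parameter}) is precisely $g_u = 0$, furnishing primal feasibility of the fractional constraint in (\ref{int-relaxed-unee-max}); the second line reproduces the $\alpha_u$-stationarity $\omega_u = \beta_u(\eta \sum_{k} p_{u,k} + P_c)$ and yields $\beta_u > 0$, so dual feasibility and complementary slackness $\beta_u g_u = 0$ hold automatically. Because the stationarity conditions in $(\bm{p}, \bm{N}, \bm{\widetilde f}, \bm{f}, \bm{\lambda})$ were shown above to be shared verbatim, reattaching the (already satisfied) $\alpha_u$-stationarity certifies that $(\bm{p^*}, \bm{N^*}, \bm{\widetilde f^*}, \bm{f^*}, \bm{\lambda^*}, \bm{\alpha^*})$ satisfies the full KKT system of (\ref{int-relaxed-unee-max}), with $\bm{\beta^*}$ serving as the multiplier of the fractional constraint.

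The hard part will be licensing the use of KKT at the optimum of (\ref{int-relaxed-unee-max}): the subtractive constraint $g_u$ carries the bilinear product $\alpha_u p_{u,k}$, so (\ref{int-relaxed-unee-max}) is \emph{not} jointly convex and KKT necessity must be backed by a constraint qualification such as the Mangasarian--Fromovitz condition, which I would verify from the convexity and regularity of $\bm{{\cal X}}$ and the strict positivity $\eta \sum_{k} p_{u,k} + P_c > 0$. The companion delicate points --- activeness of $g_u$ and the induced strict positivity $\beta_u > 0$ --- both rest on the weight assumption $\omega_u > 0$, which I would state explicitly; by contrast the parametric objective in (\ref{p-relaxed-unee-max}) is affine in the shared variables over the convex set $\bm{{\cal X}}$, so there KKT is both necessary and sufficient and no such worry arises. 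Everything else is routine gradient matching once these facts are secured.
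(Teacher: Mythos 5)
Your proposal is correct and takes essentially the same route as the paper's Appendix A: both line up the KKT systems of (\ref{int-relaxed-unee-max}) and (\ref{p-relaxed-unee-max}), obtaining the second equation of (\ref{parameter}) from stationarity in $\alpha_u$ and the first from activeness of the fractional constraint (you establish activeness by a primal perturbation of $\alpha_u$, whereas the paper reads it off complementary slackness combined with $\beta_u^* = \omega_u/(\eta\sum_{k} p_{u,k}^* + P_c) > 0$), and then observe that the remaining stationarity and slackness conditions in $(\bm{p}, \bm{N}, \bm{\widetilde f}, \bm{f}, \bm{\lambda})$ coincide verbatim between the two problems. Your explicit flagging of the constraint-qualification issue needed to invoke KKT necessity at the optimum of the nonconvex problem (\ref{int-relaxed-unee-max}) is a point of rigor the paper omits entirely, so on that step your version is the more careful one.
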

\begin{proof}
See Appendix A.
\end{proof}

Based on Theorem \ref{theory}, we can address the problem (\ref{int-relaxed-unee-max}) by solving (\ref{p-relaxed-unee-max}) while guaranteeing (\ref{parameter}), such that the solution of the Relaxed-UNEE-Max could be obtained. Furthermore, it is worth noting that if the solution is unique, it is also the global solution \cite{freund2001solving}. Toward solving (\ref{p-relaxed-unee-max}), we apply a dual-based approach, which has been widely adopted in various network settings for its simplicity of implementation, and augment it with the parametric programming to form inner loop and outer loop iterative update processes. The detailed steps are described in the following section.

\section{Algorithm for The Relaxed-UNEE-Max}
Based on our prior discussion, the Relaxed-UNEE-Max problem is equivalent to problem (\ref{int-relaxed-unee-max}), whose solution is identical to (\ref{p-relaxed-unee-max}) when satisfying (\ref{parameter}). Hence, we focus on solving problem (\ref{p-relaxed-unee-max}), and for the presentation clarity, we first outline the general idea of the solution algorithm.

The whole algorithm is split into an inner loop and an outer loop optimization problem. The algorithm starts with initializing the parametric variables $\bm{\alpha}$ and $\bm{\beta}$. For the given $\bm{\alpha}$ and $\bm{\beta}$, (\ref{p-relaxed-unee-max}) becomes a convex optimization problem with an affine objective and a convex feasible set. The inner loop applies dual decomposition approach to solve this convex optimization problem, and each iteration of the dual-based method is termed as the inner loop iterations. Multiple inner loop iterations are performed till the optimal dual and primal solutions are reached. The output of inner loop, which is $(\bm{p^{*}}, \bm{N^{*}}, \bm{\widetilde f^{*}}, \bm{f^{*}}, \bm{\lambda^{*}})$, are then fed back to the outer loop to update the parametric variables $\bm{\alpha}$ and $\bm{\beta}$. The overall algorithm terminates if the convergence condition for $\bm{\alpha}$ and $\bm{\beta}$ (we will elaborate it later) are met. Otherwise, the algorithm continues by solving the inner loop optimization problem again using the updated $\bm{\alpha}$ and $\bm{\beta}$.
\subsection{Algorithm for The Inner Loop Optimization Problem}
Suppose the parametric variables are $\bm{\alpha^{t}}$ and $\bm{\beta^{t}}$ at the ${t^{th}}$ outer loop iteration, the inner loop procedure starts with introducing a partial Lagrange multiplier $\bm{v} = \{ {v_{1,1}},...,{v_{1,|{\cal K}|}},...,{v_{U,|{\cal K}|}}\}$ w.r.t. the third constraint (nonlinear capacity constraint) in problem (\ref{relaxed-unee-max}) attempting to decouple the decision variables. We denote the partial Lagrangian by $\bm{{\cal L}}(\bm{v}; \bm{p}, \bm{N}, \bm{\widetilde f}, \bm{f}, \bm{\lambda})$ and express it as
\begin{equation}
\begin{aligned}
\bm{{\cal L}}(\bm{v}; &\bm{p}, \bm{N}, \bm{\widetilde f}, \bm{f}, \bm{\lambda}) =  \\
&\sum\limits_{u \in {\cal U}} {{\beta_u^t}\left[ {\sum\limits_{k \in {\cal K}} {{{\widetilde f}_{u,k}}({l_u})}  - {\alpha _u}(\eta \sum\limits_{k \in {\cal K}} {{p_{u,k}}}  + {P_c})} \right]}  - \\
&\sum\limits_{u \in {\cal U}} {\sum\limits_{k \in {\cal K}} {{v_{u,k}}\left[ {{{\widetilde f}_{u,k}}({l_u}) - {N_{u,k}}W{{\log }_2}(1 + \frac{{{p_{u,k}}|{h_{u,k}}{|^2}}}{{{N_{u,k}}W \cdot {N_0}}})} \right]} }.
\end{aligned}
\end{equation}

The dual function can be then obtained as
\begin{equation*}
{\cal D}(\bm{v}) = \mathop {\max }\limits_{\bm{p}, \bm{N}, \bm{\widetilde f}, \bm{f}, \bm{\lambda}} \bm{{\cal L}}(\bm{v}; \bm{p}, \bm{N}, \bm{\widetilde f}, \bm{f}, \bm{\lambda}).
\end{equation*}
Since problem (\ref{p-relaxed-unee-max}) is convex and Slater’s condition for constraint qualification is assumed to hold, it follows that there is no duality gap and thus the primal problem can be solved via its dual
\begin{equation*}
\text{Relaxed-UNEE-Max Optimal} = \mathop {\min }\limits_{\bm{v} \succeq  0} {\cal D}(\bm{v}).
\end{equation*}

\subsubsection{Dual problem}
We solve the dual variables via the projected subgradient method. First, let us denote the primal variables obtained at ${s^{th}}$ inner loop iteration as $(\bm{p^{s}}, \bm{N^{s}}, \bm{\widetilde f^{s}}, \bm{f^{s}}, \bm{\lambda^{s}})$. Then, the dual variables at ${s^{th}}$ inner loop iteration are updated as follows
\begin{equation}\label{dual_update}
v_{u,k}^{s + 1} = {\left[ {v_{u,k}^s + \delta (\widetilde {f_{u,k}^s}({l_u}) - N_{u,k}^sW{{\log }_2}(1 + \frac{{p_{u,k}^s|{h_{u,k}}{|^2}}}{{N_{u,k}^sW \cdot {N_0}}}))} \right]^ + }
\end{equation}
where $\delta$ is the step size and $[\cdot]^{+}$ denotes the projection into the set of non-negative real numbers.

In what follows, we focus on solving the primal variables given the dual variables at each inner loop iteration.
\subsubsection{Primal problem}
We now argue that the primal problem
\begin{equation*}
\arg \mathop {\max }\limits_{\bm{p}, \bm{N}, \bm{\widetilde f}, \bm{f}, \bm{\lambda}} \bm{{\cal L}}(\bm{v^{s}}; \bm{p}, \bm{N}, \bm{\widetilde f}, \bm{f}, \bm{\lambda})
\end{equation*}
can be reorganized into a routing subproblem and a physical layer resource allocation subproblem. Thus, solving the primal problem is equivalent to solving two independent subproblems, each of which is fairly straightforward. Toward this end, we rewrite original partial Lagrangian as follows
\begin{equation*}
\begin{aligned}
&\bm{{\cal L}}(\bm{v^{s}}; \cdot) = \underbrace {\sum\limits_{u \in {\cal U}} {\sum\limits_{k \in {\cal K}} {\left[ \beta _u^t {{{\widetilde f}_{u,k}}({l_u}) - {v_{u,k}^s}{{\widetilde f}_{u,k}}({l_u})} \right]} } }_\text{routing subproblem} + \sum\limits_{u \in {\cal U}} \Big[ \\
&\underbrace {\sum\limits_{k \in {\cal K}} {{v_{u,k}^s}{N_{u,k}}W{{\log}_2}(1+\frac{{{p_{u,k}}|{h_{u,k}}{|^2}}}{{{N_{u,k}}W{N_0}}})}-\beta _u^t\alpha_u^t(\eta \sum\limits_{k \in {\cal K}} {{p_{u,k}}}+{P_c}) }_\text{resource allocation subproblem} \Big]
\end{aligned}
\end{equation*}
and we can represent it as $\bm{{\cal L}}(\bm{v^{s}}; \cdot) = \bm{{\cal L}}(\bm{v^{s}}; \bm{\widetilde f}, \bm{f}, \bm{\lambda})_{rout} + \bm{{\cal L}}(\bm{v^{s}}; \bm{p}, \bm{N})_{res}$. Thus, we can separate the primal optimization problem into the following subproblems
\begin{equation}\label{routing-lag}
\begin{aligned}
& {\text{Max}}
& & \bm{{\cal L}}(\bm{v^{s}}; \bm{\widetilde f}, \bm{f}, \bm{\lambda})_{rout} \\
& \text{s.t.}
& & \sum\limits_{j \in \{ j|i \in {{\cal T}_j}\} } {{f_{j,i}}({l_u})}  + {\widetilde f_{u,i}}({l_u}) = \sum\limits_{k \in {{\cal T}_i}} {{f_{i,k}}({l_u})}, \forall u; \\
&&& \sum\limits_{j \in \{ j|b \in {{\cal T}_j}\} } {{f_{j,b}}({l_u})}  = \sum\limits_{k \in \underline {\cal K} } {{{\widetilde f}_{u,k}}({l_u})}, \forall u; \\
&&& (\ref{lnk_sum})\sim(\ref{lnk_one}), (\ref{flow_u_in}), (\ref{flow_b_out}), (\ref{flow_mesh_bnd_new}); \\
&&& 0 \le {f_{i,j}}({l_u}), \forall {l_u} \in {\cal L}, \forall i \in {\cal K}, \forall j \in {\cal K}\backslash i.
\end{aligned}
\end{equation}
which is the routing subproblem, while
\begin{equation}\label{res-lag}
\begin{aligned}
& {\text{Max}}
& & \bm{{\cal L}}(\bm{v^{s}}; \bm{p}, \bm{N})_{res} \\
& \text{s.t.}
& & 0 \le \sum\limits_{u \in {\cal U}} {\sum\limits_{k \in {\cal K}} {{N_{u,k}}} }  \le {N_{tot}}, {N_{u,k}} \in \mathbb{R}^{+}; \\
&&& 0 \le \sum\limits_{k \in {\cal K}} {{p_{u,k}}}  \le {P_{u,\max }}, \forall u; {p_{u,k}} \in \mathbb{R}^{+},
\end{aligned}
\end{equation}
which is the physical layer resource allocation (i.e., power control and channel allocation) subproblem.

It is clear that with the given dual variables $\bm{v^{s}}$ and the parametric variables $\bm{\alpha^{t}}$ and $\bm{\beta^{t}}$, the routing subproblem (\ref{routing-lag}) is a linear optimization problem w.r.t. the decision variables, which can be easily solved by many softwares, such as CPLEX. On the other hand, the resource allocation subproblem belongs to the general convex optimization problem with a concave objective and a convex feasible region. Thus, it also can be easily solved via the interior point method, for instance.

With the primal variables obtained at each iteration, they are fed back to the dual variable update process according to (\ref{dual_update}), and we keep iterating the inner loop iterations till a predefined stopping criterion is met.
\subsubsection{Stopping criterion and step size}
First, we define the stopping criterion for the inner loop algorithm as $|\bm{v^{s+1}} - \bm{v^{s}}| \le \varepsilon$, where $\varepsilon$ denotes a predefined threshold. On the other hand, the choice of step size $\delta$ affects the convergence rate of the solution. Normally, we could apply diminishing step size or constant but sufficiently small step size \cite{bertsekas1999nonlinear}, which are both guaranteed to converge to the optimal solutions. We will examine the impact of step size on the convergence rate in the performance evaluation section.

\subsection{Algorithm for The Outer Loop Optimization Problem}
The outer loop optimization problem is in a parametric subtractive form as the objective in problem (\ref{p-relaxed-unee-max}). The goal is to iteratively obtain the parametric variables $\bm{\alpha}$ and $\bm{\beta}$, where the iteration here is termed as the outer loop iteration. Parameter $\bm{\alpha}$ may be intuitively viewed as the ``price'' of power consumption while parameter $\bm{\beta}$ is introduced as the Lagrange multiplier for the fractional constraint in (\ref{int-relaxed-unee-max}). Here, we apply the gradient method \cite{he2014leakage} to update the parametric variables in a following way:
\begin{equation}\label{alpha_update}
\alpha _u^{t + 1} = \alpha _u^t - \xi (\alpha _u^t - \frac{{\sum\limits_{k \in {\cal K}} {\widetilde {f_{u,k}^{t,{s^*}}}({l_u})} }}{{\eta \sum\limits_{k \in {\cal K}} {p_{u,k}^{t,{s^*}}}  + {P_c}}}), \forall u,
\end{equation}
\begin{equation}\label{beta_update}
\beta _u^{t + 1} = \beta _u^t - \xi (\beta _u^t - \frac{{{\omega _u}}}{{\eta \sum\limits_{k \in {\cal K}} {p_{u,k}^{t,{s^*}}}  + {P_c}}}), \forall u,
\end{equation}
where $\widetilde {f_{u,k}^{t,{s^*}}}({l_u})$ and $p_{u,k}^{t,{s^*}}$ are the converged values of decision variables after ${s^{*}}$ inner loop iterations. Similar to the inner loop optimization, another small threshold value $\sigma$ is selected and the stopping criterion is set to $|\bm{\alpha^{t+1}} - \bm{\alpha^{t}}| \le \sigma$ and $|\bm{\beta^{t+1}} - \bm{\beta^{t}}| \le \sigma$. The convergence of the outer loop optimization can be guaranteed by the gradient method and the step size $\xi$ should be selected to be sufficiently small. Later, we will give the convergence analysis in the performance evaluation section.
\begin{figure}[!htb]
  \begin{center}
  \includegraphics[width=3.2in]{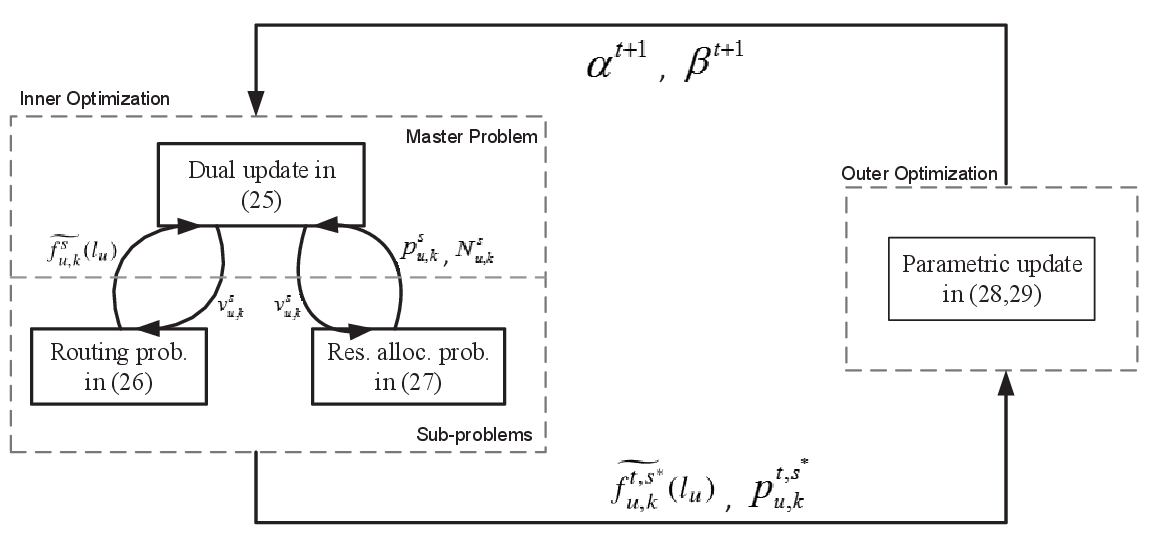}
  \end{center}
  \begin{center}
   \parbox{8cm}{\caption{Summary diagram for the solution algorithm of problem (\ref{p-relaxed-unee-max}).}\label{opt_flow}}
  \end{center}
\end{figure}

For the presentation clarity, we give a high level overview of the solution algorithm for optimization problem (\ref{p-relaxed-unee-max}) as shown in Fig.\ref{opt_flow}, which shows the necessary information exchange between solution processes. Besides, Algorithm \ref{opt_algo} formally describes the solution algorithm for the Relaxed-UNEE-Max.
\begin{algorithm}
\caption{Algorithm for Solving Relaxed-UNEE-Max}\label{opt_algo}
\begin{algorithmic}[1]
    \REQUIRE Given network settings; Initialize all the variables $\bm{p^{0}}, \bm{N^{0}}, \bm{\widetilde f^{0}}$ to any feasible value; Let $\bm{v^{0}}=\bm{\alpha^{0}}=\bm{\alpha^{0}}=0$; Set $t=s=0$; Initialize thresholds $\sigma, \varepsilon$ and step size $\delta, \xi$.
    \ENSURE $\bm{p^{*}}, \bm{N^{*}}, \bm{\widetilde f^{*}}, \bm{f^{*}}, \bm{\lambda^{*}}$
\STATE Calculate $\bm{\alpha^{1}}$, $\bm{\beta^{1}}$ and $\bm{v^{1}}$ according to Eq.(\ref{parameter}) and Eq.(\ref{dual_update}), respectively.
\WHILE{$|\bm{\alpha^{t+1}} - \bm{\alpha^{t}}| \geq \sigma$ or $|\bm{\beta^{t+1}} - \bm{\beta^{t}}| \geq \sigma$}
\STATE $t \leftarrow t+1$;
\WHILE{$|\bm{v^{s+1}} - \bm{v^{s}}| \geq \varepsilon$}
\STATE $s \leftarrow s+1$;
\STATE Solve resource allocation sub-problem (\ref{res-lag}) and obtain $\bm{p^{s}}, \bm{N^{s}}$;
\STATE Solve routing sub-problem (\ref{routing-lag}) and obtain $\bm{\widetilde f^{s}}, \bm{f^{s}}, \bm{\lambda^{s}}$;
\STATE Update dual variable $\bm{v^{s+1}}$ according to Eq.(\ref{dual_update});
\ENDWHILE
\STATE Update parametric variables $\bm{\alpha^{t+1}}$ and $\bm{\beta^{t+1}}$ according to Eq.(\ref{alpha_update}) and Eq.(\ref{beta_update}), respectively;
\ENDWHILE
\end{algorithmic}
\end{algorithm}
\section{User Association and Integer Rounding}
To this end, the problem (\ref{p-relaxed-unee-max}) is solved via the prior algorithm whose solution is identical to the one in the Relaxed-UNEE-Max problem (\ref{relaxed-unee-max}). However, due to the physical constraint that every user can only be associated with one infrastructural node, the previously obtained solution should be converted to a feasible one for the original problem. Besides, the integer property of the number of allocated OFDM sub-channels also requires a further rounding procedure to the obtained solution. Inevitably, this step could introduce performance degradation, but in the performance evaluation section, we will show that its impact on the performance is quite limited.

First, we present the association rule as
\begin{equation*}
k = \arg \, \mathop {\max }\limits_{i \in {\cal K}} \frac{{\widetilde {f_{u,i}^*}({l_u})}}{{\eta p_{u,i}^* + {P_c}}} \,, \forall u.
\end{equation*}
The above operation indicates that we associate the user with the infrastructural node which provides the largest value of EE. In other words, if end user $u$ obtains the highest EE from node $k$, we set the association variable ${x_{u,k}}=1$ while ${x_{u,i}}=0$ for $i \ne k$. In so doing, we can fix the association variables and the original problem UNEE-Max in (\ref{unee_max}) could be simplified significantly. Here, we coin this simplified problem by fixing the association variables as \emph{Asso-UNEE-Max} and it can be similarly addressed by the prior algorithm in Fig.\ref{opt_flow}. In later section, the comparison between the network performance of Asso-UNEE-Max and the one obtained by solving Relaxed-UNEE-Max will be demonstrated.

Next, we introduce the integer rounding function as
\begin{equation*}
Rnd({N_u}) = \max \{ \left\lfloor {{N_u}} \right\rfloor ,1\} \,, \forall u,
\end{equation*}
where the operator $\left\lfloor  \cdot  \right\rfloor$ rounds the input to the greatest integer that is less than or equal to the input. Besides, the reason we apply $max$ function is to guarantee that every end user can at least be assigned with one sub-channel for fairness. The rounding operation is applied to the solution obtained from solving the Asso-UNEE-Max problem, so that the OFDM channel allocation can be determined accordingly. However, the flow variables obtained from Asso-UNEE-Max may not be feasible anymore when doing integer rounding. Therefore, we need to re-solve the UNEE-Max problem (\ref{unee_max}) and get the calibrated flow values which are the feasible ones. Noticing that for the fixed channel allocation, sub-problem (\ref{res-lag}) can be easily addressed by classical iterative water-filling algorithm \cite{he2013water}, which is just a one-dimensional (i.e., power) optimization problem. Here, we denote this solution as the one from a so-called \emph{Rnd-UNEE-Max} problem. Its performance will be compared with the ones obtained from Asso-UNEE-Max and Relaxed-UNEE-Max, respectively, in the evaluation section.
\begin{algorithm}
\caption{Algorithm for User Association and Integer Rounding of Outputs of Algorithm \ref{opt_algo}}\label{rnd_algo}
\begin{algorithmic}[1]
    \REQUIRE Given the output of Algorithm \ref{opt_algo}.
    \ENSURE The calibrated variables $\bm{p^{*}}, \bm{N^{*}}, \bm{\widetilde f^{*}}, \bm{f^{*}}, \bm{\lambda^{*}}$
\FOR {u=1:U}
\STATE Find $k$ such that $k = \arg \, \mathop {\max }\limits_{i \in {\cal K}} \frac{{\widetilde {f_{u,i}^*}({l_u})}}{{\eta p_{u,i}^* + {P_c}}}$;
\STATE Set ${x_{u,k}}=1$;
\ENDFOR
\STATE Update the problem (\ref{unee_max}) and solve the Relaxed-UNEE-Max according to Alg.\ref{opt_algo} to obtain $\bm{p^{'}}, \bm{N^{'}}, \bm{\widetilde f^{'}}, \bm{f^{'}}, \bm{\lambda^{'}}$;
\FOR {u=1:U}
\STATE Let ${N_u}^{*} = \max \{ \left\lfloor {{N_u}^{'}} \right\rfloor ,1\}$;
\ENDFOR
\STATE Update the problem (\ref{unee_max}) and solve the Relaxed-UNEE-Max according to Alg.\ref{opt_algo} to obtain $\bm{p^{*}}, \bm{\widetilde f^{*}}, \bm{f^{*}}, \bm{\lambda^{*}}$.
\end{algorithmic}
\end{algorithm}

In Algorithm \ref{rnd_algo}, we formally give the detailed steps to describe the algorithm for user association and integer rounding for the outputs of Algorithm \ref{opt_algo}.

\section{Performance Evaluation}
\subsection{Simulation Setup}
We consider a $500\times500 {m^2}$ area served by one BS and 12 CR routers, where the BS is put in the center while CR routers represented in squares are placed around it, as shown in Fig.\ref{topology}. We also randomly scatter 35 end users in this area whose locations are shown by dots. The end users' devices are assumed to have an identical circuitry power consumption ${P_c}=50 mW$ and power amplifier efficiency $\eta=5.78$. We assume the users' allowable transmit power ${P_{u, \max}}$ may vary and its impact on system performance will be examined later. To provide fairness for end users, all the weighting factors $\omega$ are set to 1. On the infrastructure side, the CR routers are assumed to employ fixed power ${P_t}$ for transmission and their antenna gain is set to $\zeta=4.63$. The power interference threshold $P_I^{th}$ is set to $3.59\times10^{-7} W$ while the receiving power threshold $P_r^{th}$ is set to $1.0\times10^{-6} W$. The transmission environment between infrastructural nodes are assumed to have path loss exponent $n=3$. Given these system parameters, the interference/communication range can be calculated numerically and we could obtain the conflict graph in this regard. We utilize the OFDM channel model for wireless link between end user $u$ and infrastructural node $k$ as $128.1 + 37.6{\log _{10}}({r_{u,k}})$ dBm where ${r_{u,k}}$ is in kilometers \cite{3gpp2010further}. Following the standard, we set the bandwidth of each sub-channel as $W=180 KHz$. The noise power spectral density is set to ${N_0} = 1 \times {10^{ - 12}} W/Hz$. In addition, for the harvested band, we consider that the availability of it follows a uniform distribution.
\begin{figure}[!htb]
  \begin{center}
  \includegraphics[width=2.8in]{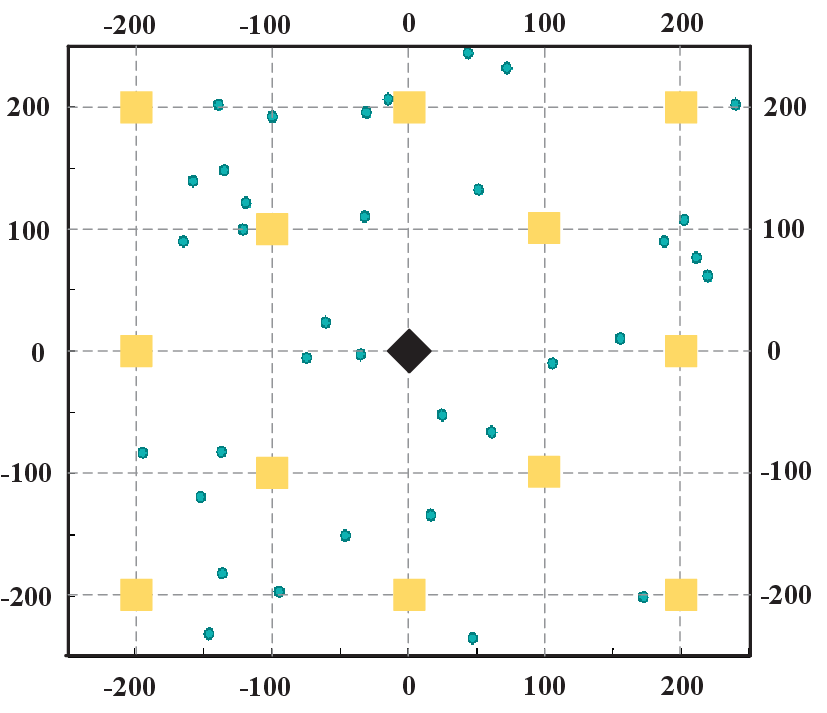}
  \end{center}
  \begin{center}
   \parbox{8cm}{\caption{Evaluated network topology in an $500\times500 {m^2}$ area: 35 end users in blue dots, 12 CR routers in yellow squares and 1 BS in black diamond.}\label{topology}}
  \end{center}
\end{figure}

As for the algorithmic parameter settings, we set the stopping threshold $\varepsilon$ and $\sigma$ as 0.01 and 0.8, respectively; while the step size $\delta=\xi=1 \times {10^{ - 5}}$.
\subsection{Benchmark Setting}
To demonstrate the advantage of our proposed ideology in improving end users' energy efficiency, we leverage the basic cellular network (i.e., 4G/LTE) as the benchmark to compare with. In other words, we consider the same end users within this geographical area served by the small cell BS as shown in Fig.\ref{topology} (excluding CR routers). Similarly, the benchmark UNEE maximization problem, coined as Ben-UNEE-Max, can be proposed as follows:
\begin{equation}\label{ben_unee_max}
\begin{aligned}
& {\text{Max}}
& & \sum\limits_{u \in {\cal U}} {{\omega _u}\frac{{r({l_u})}}{{\eta {p_u} + {P_c}}}} \\
& \text{s.t.}
& & \sum\limits_{u \in {\cal U}} {{N_u}}  \le {N_{tot}}, {N_u} \in {\mathbb{Z}^ + }; \\
&&& 0 \le {r({l_u})} \le {N_{u}}W{\log _2}(1 + \frac{{{p_{u}}|{h_{u,b}}{|^2}}}{{{N_{u}}W \cdot {N_0}}}), \forall u \in {\cal U}; \\
&&& 0 \le {p_u} \le {P_{u,\max }}, \forall u \in {\cal U}.
\end{aligned}
\end{equation}

In this benchmark setting, end users have to be served by the BS so the design of user association, link scheduling and flow routing is eliminated. Rather, we only consider the power control and channel allocation in this one-hop transmission scenario. To effectively solve (\ref{ben_unee_max}), the same transformation approach can be applied to firstly convert it into a tractable one, which is then solved via the water-filling algorithm \cite{he2013water}.

\subsection{Results and Analysis}
First, we examine the convergence behaviors for both inner loop and outer loop optimizations. Since the inner loop is a dual-based (i.e., Lagrangian) algorithm, we also compare its convergence rate under different selection of step sizes. The results are shown in Fig.\ref{cov_analysis}. For demonstrative purposes, we only randomly select 10 users for this simulation and use their average EE as the metric to show the convergence performance. Here, the user's maximum allowable power ${P_{u, \max}}$ is set as 1.5W while CR routers' transmit power ${P_t}$ is set as 1W. The bandwidth of harvested band ${W_m}$ is 100KHz and we set the confidence level $\Delta=0.7$, while the number of OFDM sub-channels is selected to ${N_{tot}}=100$. Moreover, the data in Fig.\ref{inner_iter} is collected at the last iteration of the outer loop optimization.

As we can see from Fig.\ref{inner_iter}, the average EE monotonically increases till the algorithm converges and the EE remains relatively constant (i.e., the difference not exceeding the threshold) afterwards. It can be observed that the algorithm can be guaranteed to converge to the same value under three different step-size settings, but $\delta=1 \times {10^{ - 5}}$ gives the fastest convergence rate (around 48 iterations). This is because as long as the step size is sufficiently small to guarantee convergence, an even smaller step size is not necessary as it slows down the rate to the optimal value. On the other hand, Fig.\ref{outer_iter} illustrates the convergence performance for the outer loop algorithm. For the notational convenience, we take the reciprocal of $\beta$ so that its unit now becomes $W$, while the unit of $\alpha$ is naturally being $Kbits/J$ according to (\ref{parameter}). It can be seen that the converged optimal value of $\alpha$ is exactly the same as the one in Fig.\ref{inner_iter}, which proves the overall convergence of Algorithm \ref{opt_algo}. On the other hand, we observe that the average transmit power for end users is around 0.28W at convergence, which is a small value compared to ${P_{u, \max}}$.
\begin{figure}[!t]
\begin{subfigure}[t]{0.225\textwidth}
  \includegraphics[width=\linewidth]{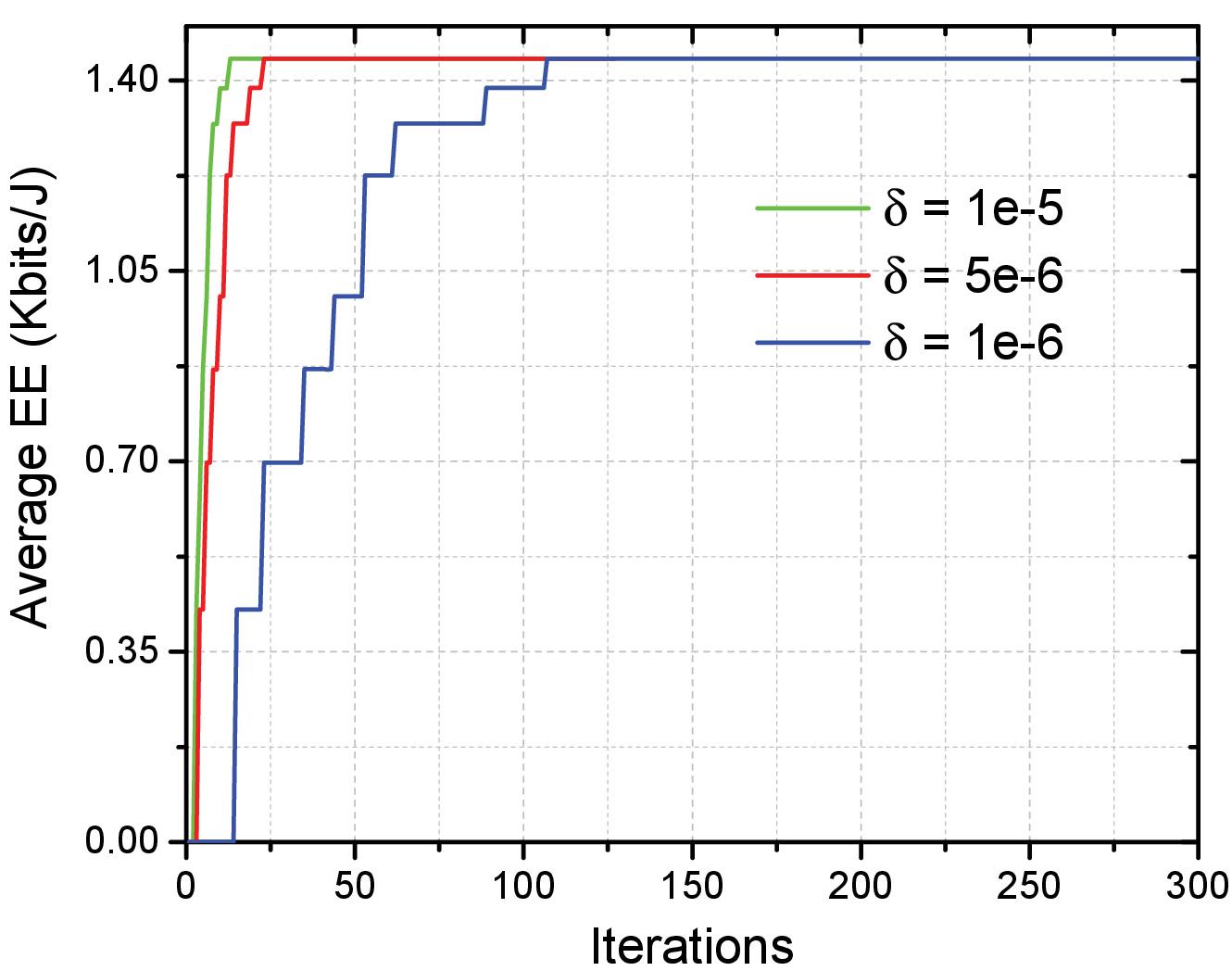}
  \caption{\small Inner loop convergence rate under various step sizes} \label{inner_iter}
\end{subfigure}
\begin{subfigure}[t]{0.25\textwidth}
  \includegraphics[width=\linewidth]{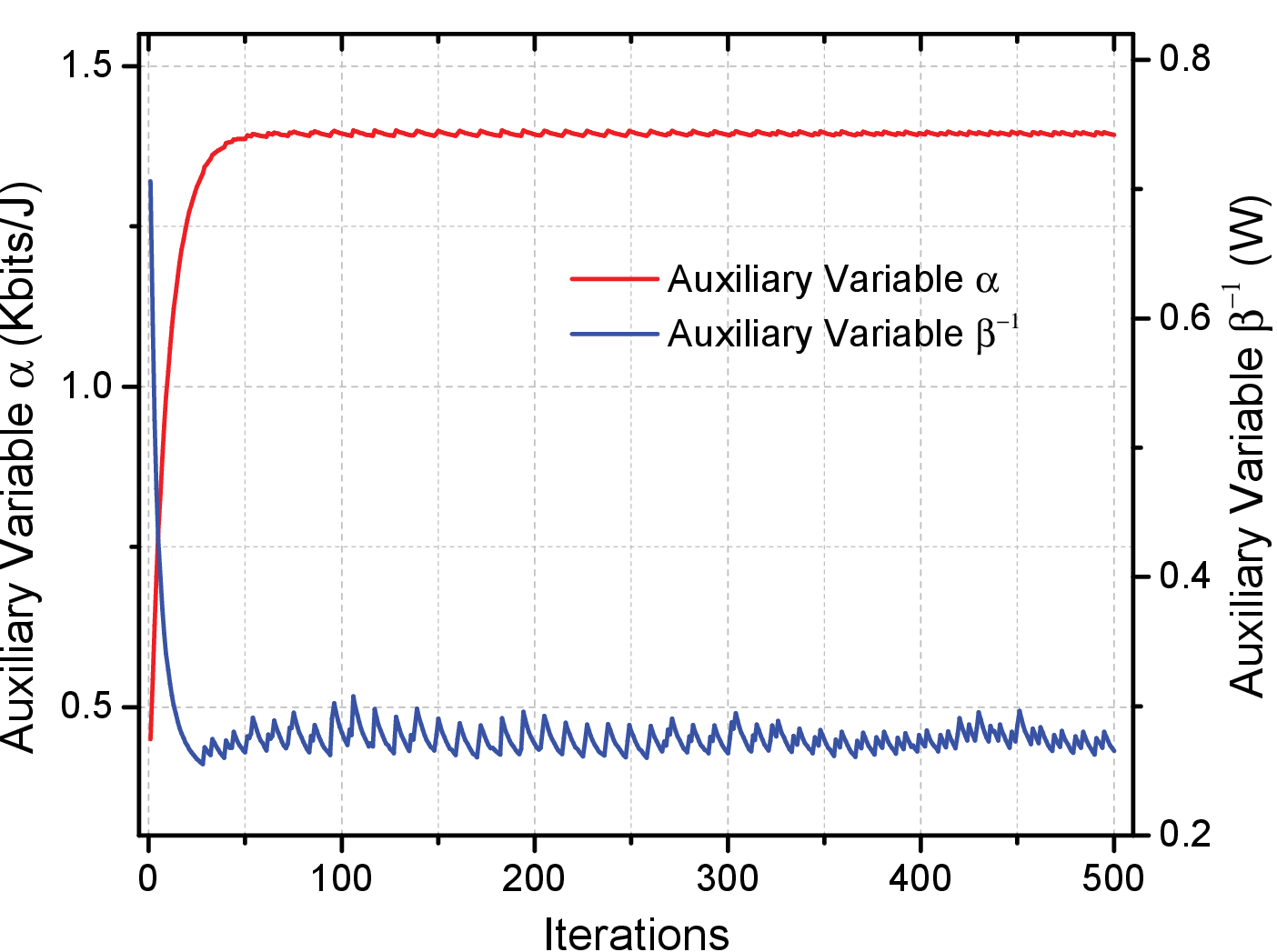}
  \caption{\small Outer loop convergence rate w.r.t. parametric variables} \label{outer_iter}
\end{subfigure}
   \caption{Convergence analysis for Algorithm.\ref{opt_algo}} \label{cov_analysis}
\end{figure}

Given the feasibility of Algorithm \ref{opt_algo}, we now conduct the performance comparison from solving Relaxed-UNEE-Max, Asso-UNEE-Max and Rnd-UNEE-Max utilizing Algorithm \ref{opt_algo} and Algorithm \ref{rnd_algo}, respectively. Besides, by solving (\ref{ben_unee_max}), we obtain the network-wide energy efficiency in the 4G/LTE cellular network, which is used as the benchmark. The evaluation is conducted under different network sizes in terms of the number of end users. We also employ the same values of ${N_{tot}}$, ${P_{u, \max}}$, ${W_m}$ and confidence level $\Delta$ as the previous simulation. The results are shown in Fig.\ref{algo_comp}. It can be seen that these curves demonstrate the same relationship between the network size and the network-wide energy efficiency: as the number of users increase linearly, the network-wide energy efficiency first grows exponentially and then increases slowly. The reason is that the network resources in terms of OFDM sub-channels and harvested band are sufficient when the network size is small and introducing more users will increase the resource utilization efficiency, thus increasing the total network EE. As the number of users keeps increasing, the network becomes congested in the sense that scheduling and routing in the cognitive mesh network becomes the major bottleneck to further boost the network performance. In the later evaluation, we will demonstrate this phenomenon.
\begin{figure}[!htb]
  \begin{center}
  \includegraphics[width=2.8in]{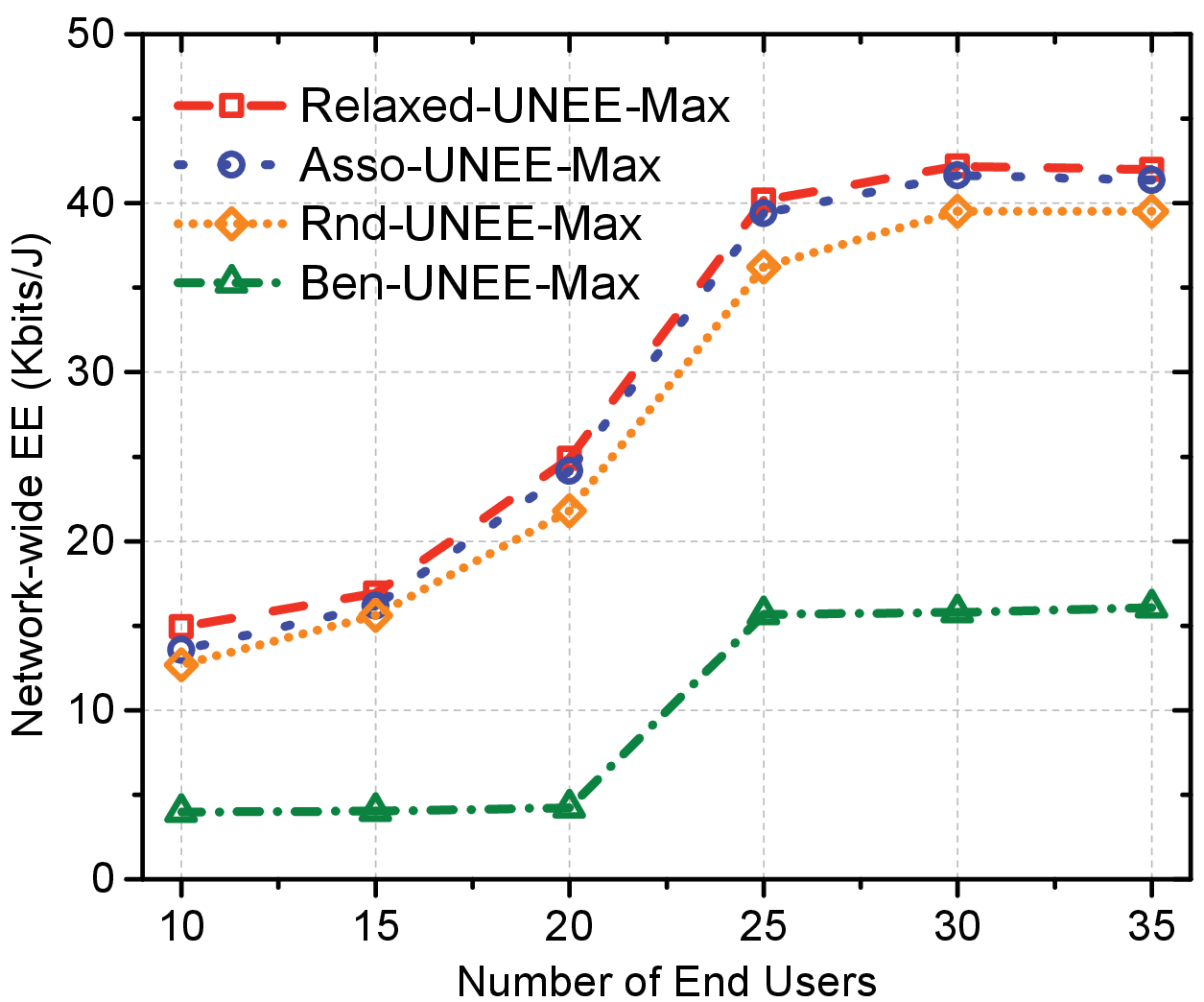}
  \end{center}
  \begin{center}
   \parbox{8cm}{\caption{Performance Comparison for different problems under various network sizes.}\label{algo_comp}}
  \end{center}
\end{figure}

On the other hand, we can see that the solution to the Relaxed-UNEE-Max problem yields the highest network-wide EE since every user can be associated with several infrastructural nodes to take full advantage of network diversity. However, by fixing the association variables and solving the Asso-UNEE-Max problem does not sacrifice too much network performance, as shown in Fig.\ref{algo_comp}. Based on the solution of the Asso-UNEE-Max problem, further applying rounding procedure and solving the Rnd-UNEE-Max problem, gives an even lower network-wide EE. Nevertheless, the optimality gaps between the solutions of Relaxed-UNEE-Max and Rnd-UNEE-Max reduces from 19.35\% to 7.14\% as the number of users increases from 10 to 35, which means our proposed approximation algorithm for association and rounding works well when the network size scales up.

Furthermore, the network-wide energy efficiency in our network is much higher (e.g., 143\% more in the scenario of 25 end users) than that in the traditional cellular network. Such a significant gain in the energy efficiency on one hand attributes to the additional harvested spectrum while on the other hand is due to the close proximity between end users and CR routers allowing lower transmit power for users.

\begin{figure}[!t]
\begin{subfigure}[t]{0.225\textwidth}
  \includegraphics[width=\linewidth]{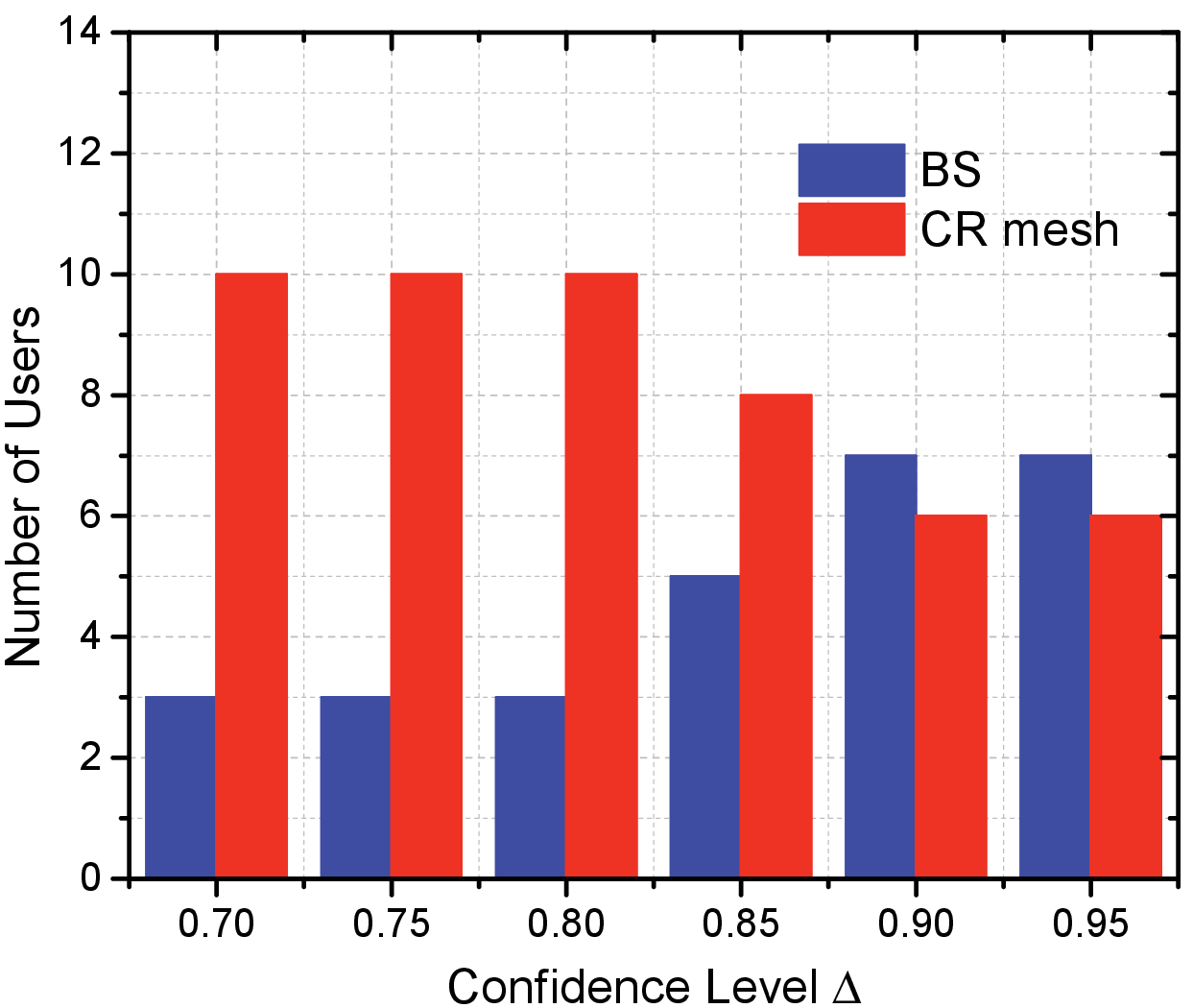}
  \caption{\small Impact of harvested band on user associations} \label{asso_eval}
\end{subfigure}
\begin{subfigure}[t]{0.25\textwidth}
  \includegraphics[width=\linewidth]{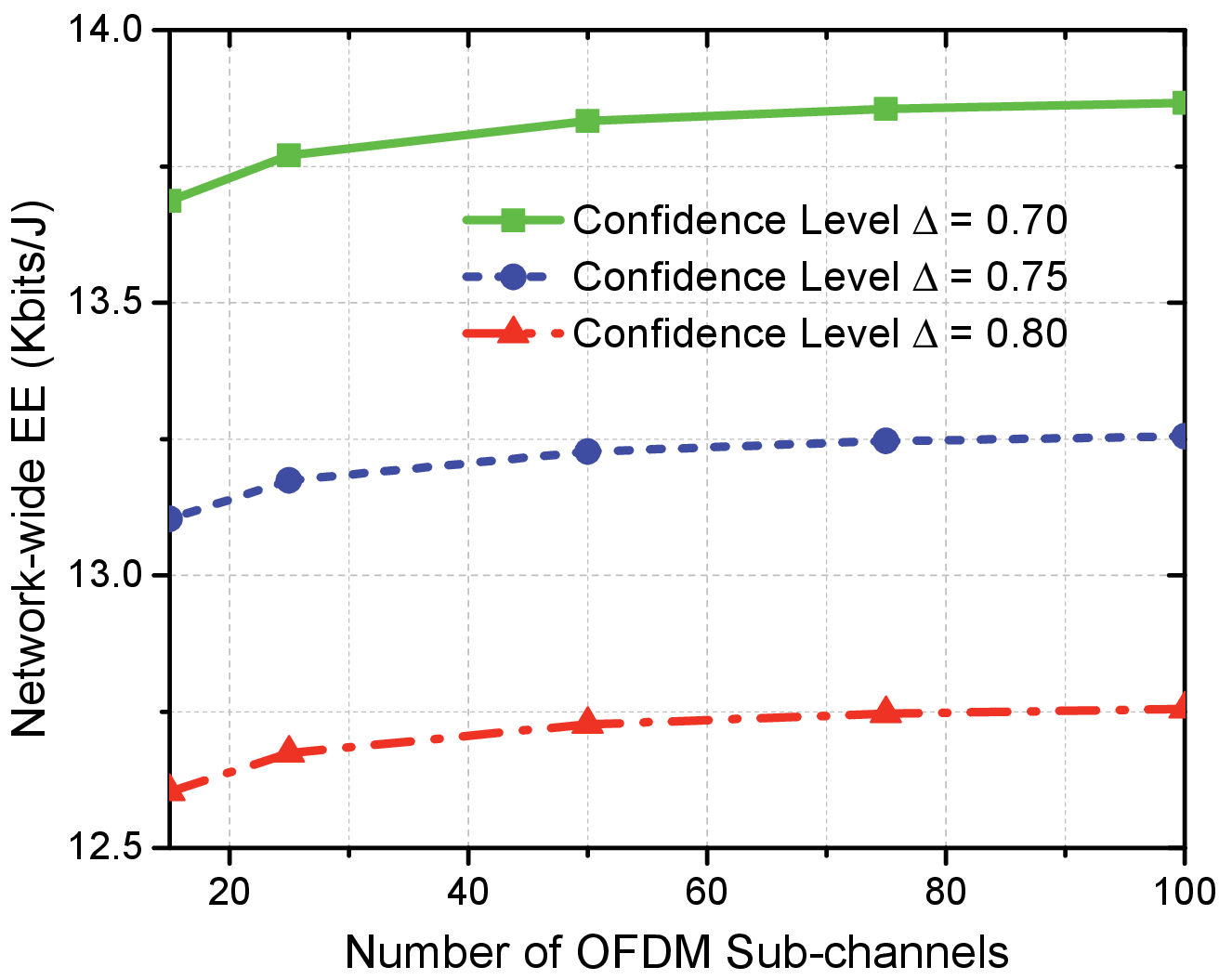}
  \caption{\small Impact of number of OFDM channels on network EE} \label{ofdm_eval}
\end{subfigure}
   \caption{Impact of bandwidth on network performance.} \label{bnd_eval}
\end{figure}
Next, we analyze how the number of OFDM sub-channels and uncertainty of harvested band could affect the user association decision and network performance. For the user association evaluation, we randomly select 13 users just for demonstrative purposes and set ${N_tot}=100$ and ${W_m}=20KHz$, while keeping other parameters the same as before. The result is shown in Fig.\ref{asso_eval}, which illustrates the number of users connected to the BS and to the cognitive mesh network. It can be seen that when the confidence level increases, more users are switched from the cognitive mesh network to the BS. The reason is that higher confidence level means more strict requirement on constraint (\ref{flow_mesh_bnd}), which in other words means that the usable harvested band becomes more limited. Therefore, some users are re-associated with the BS so that their throughput would be higher although they may use higher transmit power.

The harvested band affects the backbone capability, while the number of OFDM sub-channels impacts the capacity of the first hop from end users and infrastructural nodes. As shown in Fig.\ref{ofdm_eval}, we examine how different OFDM sub-channel patterns (e.g., $\{6, 15, 25, 50, 75, 100\}$) influences the network-wide EE. It can be observed that for the fixed uncertainty of the harvested band (i.e., available bandwidth), network-wide EE increases in a decreasing rate as the number of OFDM sub-channels increases. The reason is that as the number of OFDM sub-channels becomes sufficiently large, the available harvested band allocated to the cognitive mesh network becomes the bottleneck to support the traffic on the first hop links. This also explains the observation that the network-wide EE increases as the confidence level decreases for the fixed number of OFDM sub-channels.

\begin{figure}[!htb]
  \begin{center}
  \includegraphics[width=2.8in]{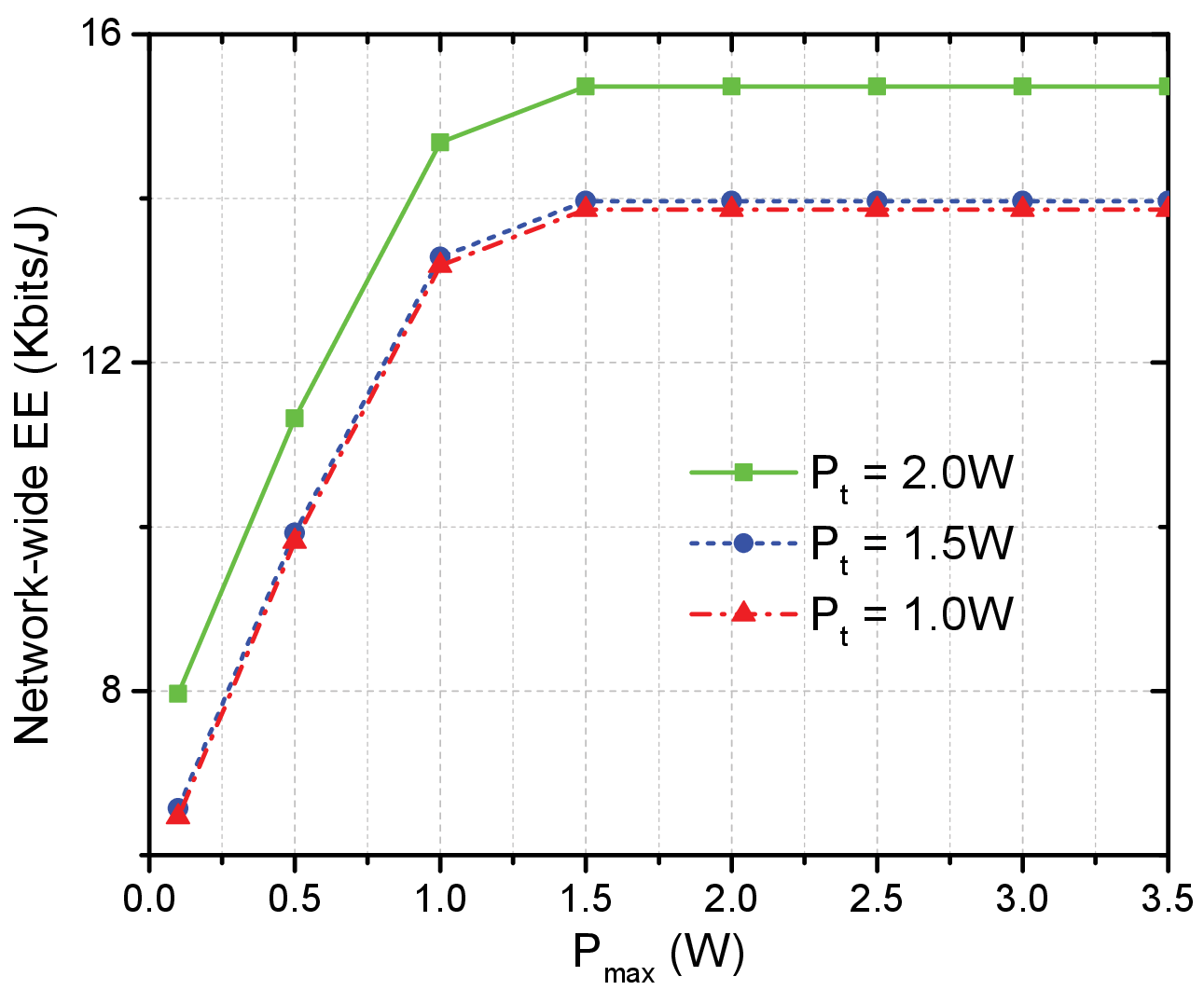}
  \end{center}
  \begin{center}
   \parbox{8cm}{\caption{Impact of users' and CR routers' transmit power on network performance}\label{power_comp}}
  \end{center}
\end{figure}
Another design dimension that could impact the network performance is the transmit power. Here, we examine how the users' transmit power as well as CR routers' transmit power could jointly affect the network-wide EE. The results are shown in Fig.\ref{power_comp}. First of all, we see that the user's maximum allowable power ${P_{max}}$ only affects the network-wide EE at its lower value while the network performance stays constant as ${P_{max}}$ continues to increase. The reason is that end users can utilize very low power for connection and increasing the ${P_{max}}$ would not give a higher transmit power in order to optimize the EE. On the other hand, the relationship between CR routers' transmit power ${P_t}$ and network performance is worth explaining. According to Eq.(\ref{tx_nb}-\ref{int_nb}), ${P_t}$ impacts the communication/interference range, which further influences the construction of the conflict graph. For instance, when ${P_t}=1.0W$, ${R_i^T}=166.7$m and ${R_i^I}=234.5$m; while when ${P_t}=2.0W$, ${R_i^T}=209.9$m and ${R_i^I}=295.46$m. From the network topology shown in Fig.\ref{topology}, we can clearly see that the number of reachable infrastructural nodes for each CR router becomes larger while each CR router's interfered nodes remain the same. As a result, the size of each MIS $q$ increases and more links can be scheduled for transmission at the same time, which enhances the achievable link capacity in the mesh network. Therefore, the network-wide EE increases with the ${P_t}$ increasing from 1W to 2W. It should be noted that this may not always hold true if the power increase incurs more interfered nodes. However, this general trend reflects the fact that by sacrificing the infrastructure's power consumption, end users' EE will be improved, which indicates that the power consumption burden is shifted from light-weighted end devices to the more powerful infrastructural nodes.

\section{Conclusion}
In this work, we investigate the energy efficiency (EE) design of battery-powered devices. Our ideology is to shift their energy consumption to grid-powered devices, thus increasing their EE. This ideology is realized in a cognitive mesh network, in which we model a cross-layer optimization problem to maximize end devices' EE. Specifically, we propose an objective function as the weighted sum of each device's EE and characterize constraints including device association, flow routing, link scheduling, channel allocation and power control. To solve this complex problem, we propose parametric subtractive transformation, $\Delta$-confidence level, critical MISs and integer relax-then-rounding to convert the original problem into a tractable one, and further decouple this large scale optimization problem into a two-layer optimization problem. We conduct extensive simulations to demonstrate the optimality and feasibility of our proposed algorithms and also show how the design variables impact the network performance.

\bibliography{ee}
\bibliographystyle{IEEEtran}

\newpage
\appendices
\section{Proof of Theorem \ref{theory}}
First, we re-write the FCL constraints in problem (\ref{relaxed-unee-max}) using a vector form for compact representation. Given the channel model and transmit power, all the communication links among infrastructural nodes can be calculated and we use $m$ to represent the link of an ordered node pair $(i,j)$ while the set of links is denoted as ${\cal M} = \{ 1,...,m,...,M\}$. For any end user $u$, we can view its flow as a single-commodity model. Now, we introduce an incidence matrix $\bm{{A_1}} \in {\mathbb{Z}^{|{\cal K}| \times M}}$ to represent flow constraint on intermediate infrastructural nodes, where each entry ${a_{i,m}}$ takes value 1 if $i$ is the start node of link $m$; -1 if $i$ is the end node of link $m$; 0 otherwise. We also present $\bm{{A_2}} \in {\mathbb{Z}^{1 \times M}}$ and $\bm{{A_3}} \in {\mathbb{Z}^{1 \times M}}$ to represent flows coming out the BS and flows aggregated at the BS from CR mesh network, respectively, where ${a_{1,m}}$ is selected as 1 or 0. Thus, the FCL constraint can be re-write as follows:
\begin{equation}\label{fcl_vec}
\bm{A} \cdot \bm{f} = \bm{D} \cdot \bm{{\widetilde f_u}}\,, \quad \forall u \in {\cal U}
\end{equation}
where $\bm{A}=[\bm{{A_1}}; \bm{{A_2}}; \bm{{A_3}}] \in {\mathbb{Z}^{|{\cal K}+2| \times M}}$ is the concatenation of three matrices, $\bm{f}={[{f_1},...,{f_m},...,{f_M}]^T}$ represents the links, $\bm{D} \in {\mathbb{Z}^{|{\cal K} + 2| \times |{\cal K}|}}$ is also an incidence matrix consisting of 1s and 0s representing the source of end user's traffic, and $\bm{{\widetilde f_u}} = {[{\widetilde f_{u,1}},...,{\widetilde f_{u,k}},...,{\widetilde f_{u,|{\cal K}|}}]^T}$ represents the user traffic.

We introduce Lagrange multipliers $\bm{\beta}$=$\{ {\beta _1},...,{\beta _u},...,{\beta _U}\}$ associated with EE constraints in (\ref{int-relaxed-unee-max}), $\bm{v}$=$\{ {v_{1,1}},...,{v_{u,1}},...,{v_{U,|{\cal K}|}}\}$ for capacity constraint of the link between end users and infrastructural nodes, $\bm{\psi}$=$\{ {\psi _1},...,{\psi _m},...,{\psi _M}\}$ for the capacity constraint (\ref{flow_mesh_bnd_new}) of the link between infrastructural nodes, $\bm{\varphi}$=$\{ {\varphi _1},...,{\varphi _u},...,{\varphi _U}\}$ for the transmit power constraint, $\bm{\chi}$=$\{ {\chi _{1,1}},...,{\chi _{1,|{\cal K}| + 2}},...,{\chi _{U,|{\cal K}| + 2}}\}$ associated with the FCL constraint (\ref{fcl_vec}) and $\bm{\vartheta}$=$\{ {\vartheta _1},...,{\vartheta _q},...,{\vartheta _{{Q^{'}}}}\}$ for the link scheduling constraint (\ref{lnk_one}). Thus, the Lagrange function of (\ref{int-relaxed-unee-max}) is given by
\begin{equation}
\begin{aligned}
{\cal L}&(\bm{p},\bm{N},\bm{\widetilde f},\bm{f},\bm{\lambda}, \bm{\alpha}; \bm{\beta}, \bm{v}, \bm{\psi}, \bm{\varphi}, \bm{\chi}, \bm{\vartheta}) = \sum\limits_{u \in {\cal U}} {{\omega _u}{\alpha _u}} - \\
&\sum\limits_{u \in {\cal U}} {{\beta _u}[{\alpha _u}(\eta \sum\limits_{k \in {\cal K}} {{p_{u,k}}}  + P) - \sum\limits_{k \in {\cal K}} {{{\widetilde f}_{u,k}}({l_u})} ]} - \\
&\sum\limits_{u \in {\cal U}} {\sum\limits_{k \in {\cal K}} {{v_{u,k}}\left[ {{{\widetilde f}_{u,k}}({l_u}) - {N_{u,k}}W{{\log }_2}(1 + \frac{{{p_{u,k}}|{h_{u,k}}{|^2}}}{{{N_{u,k}}W \cdot {N_0}}})} \right]} } - \\
&\sum\limits_{m \in {\cal M}} {{\psi _m}\left[ {\sum\limits_{{l_u} \in {\cal L}} {{f_m}({l_u})}  - \sum\limits_{q = 1}^{{Q^{'}}} {{\lambda _q}F_{{c_{i,j}}}^{ - 1}(1 - \beta )} } \right]}  - \\
&\sum\limits_{u \in {\cal U}} {{\varphi _u}(\sum\limits_{k \in {\cal K}} {{p_{u,k}}}  - {P_{u,\max }})} - \sum\limits_{u \in {\cal U}} {\sum\limits_{i = 1}^{|{\cal K}| + 2} {{\chi _{u,i}}(\bm{{a_i}f} - \bm{{d_i}{{\widetilde f}_u}})} } \\
&- \sum\limits_{q = 1}^{{Q^{'}}} {{\vartheta _q}({\lambda _q} - 1)},
\end{aligned}
\end{equation}
where the $\bm{{a_i}}=[{a_{i,1}},...,{a_{i,m}},...,{a_{i,M}}]$ is the ${i^{th}}$ row of matrix $\bm{A}$ and $\bm{{d_i}}=[{d_{i,1}},...,{d_{i,k}},...,{d_{i,|{\cal K}|}}]$ is the ${i^{th}}$ row of matrix $\bm{D}$.

Suppose $(\bm{p^{*}}, \bm{N^{*}}, \bm{\widetilde f^{*}}, \bm{f^{*}}, \bm{\lambda^{*}}, \bm{\alpha^{*}})$ are the solutions to problem (\ref{int-relaxed-unee-max}), there exists $\bm{\beta^{*}}, \bm{v^{*}}, \bm{\psi^{*}}, \bm{\varphi^{*}}, \bm{\chi^{*}}, \bm{\vartheta^{*}}$ such that the corresponding KKT conditions are as follows
\begin{equation}
\begin{aligned}
\frac{{\partial {\cal L}}}{{\partial {p_{u,k}}}} &= \beta _u^*\alpha _u^*\eta  - v_{u,k}^{*}\frac{\partial }{{\partial {p_{u,k}}}}(N_{u,k}^*W{\log _2}(1 + \frac{{p_{u,k}^*|{h_{u,k}}{|^2}}}{{N_{u,k}^*W \cdot {N_0}}})) \\
&- \varphi _u^{*} = 0, \qquad \qquad \qquad \qquad \qquad \qquad \forall u, k;
\end{aligned}
\end{equation}
\begin{equation}
\frac{{\partial {\cal L}}}{{\partial {N_{u,k}}}} =  - v_{u,k}^*\frac{\partial }{{\partial {N_{u,k}}}}(N_{u,k}^*W{\log _2}(1 + \frac{{p_{u,k}^*|{h_{u,k}}{|^2}}}{{N_{u,k}^*W{N_0}}})) = 0, \forall u, k;
\end{equation}
\begin{equation}
\frac{{\partial {\cal L}}}{{\partial {{\widetilde f}_{u,k}}}} = \beta _u^* - v_{u,k}^* + \chi _{u,k}^*({d_{k,k}} + \sum\limits_{i = |{\cal K}| + 1}^{|{\cal K}| + 2} {{d_{i,k}}} ) = 0, \forall u, k;
\end{equation}
\begin{equation}
\frac{{\partial {\cal L}}}{{\partial {\lambda _q}}} = \sum\limits_{m \in {\cal M}} {\psi _m^*F_{{c_m}}^{ - 1}(1 - \beta )}  - \vartheta _q^* = 0, \qquad \qquad \qquad  q;
\end{equation}
\begin{equation}\label{kkt_alpha}
\frac{{\partial {\cal L}}}{{\partial {\alpha _u}}} = {\omega _u} - \beta _u^*(\eta \sum\limits_{k \in {\cal K}} {{p_{u,k}}}  + P) = 0 \qquad \qquad \qquad \forall u;
\end{equation}
\begin{equation}\label{kkt_beta}
\beta _u^*\frac{{\partial {\cal L}}}{{\partial {\beta _u}}} = \beta _u^*[\alpha _u^*(\eta \sum\limits_{k \in {\cal K}} {p_{u,k}^*}  + P) - \sum\limits_{k \in {\cal K}} {\widetilde {f_{u,k}^*}} ] = 0, \quad \forall u;
\end{equation}
\begin{equation}
v_{u,k}^*\frac{{\partial {\cal L}}}{{\partial {v_{u,k}}}} = v_{u,k}^*\left[\widetilde {f_{u,k}^*} - N_{u,k}^*W{\log _2}(1 + \frac{{p_{u,k}^*|{h_{u,k}}{|^2}}}{{N_{u,k}^*W {N_0}}})\right] = 0, \forall u, k;
\end{equation}
\begin{equation}
\psi _m^*\frac{{\partial {\cal L}}}{{\partial {\psi _m}}} = \psi _m^*\left[ {\sum\limits_{{l_u} \in {\cal L}} {{f_m}({l_u})}  - \sum\limits_{q = 1}^{{Q^{'}}} {{\lambda _q}F_{{c_m}}^{ - 1}(1 - \beta )} } \right] = 0, \quad \forall m;
\end{equation}
\begin{equation}
\varphi _u^*\frac{{\partial {\cal L}}}{{\partial {\varphi _u}}} = \varphi _u^*(\sum\limits_{k \in {\cal K}} {{p_{u,k}}}  - {P_{u,\max }}) = 0, \qquad \forall u;
\end{equation}
\begin{equation}
\chi _{u,i}^*\frac{{\partial {\cal L}}}{{\partial {\chi _{u,i}}}} = \chi _{u,i}^*({a_{i,k}}f_k^* - {d_{i,k}}\widetilde {f_{u,k}^*}) = 0, \qquad \forall i, u;
\end{equation}
\begin{equation}
\vartheta _q^*\frac{{\partial {\cal L}}}{{\partial {\vartheta _q}}} = \vartheta _q^*({\lambda _q} - 1) = 0, \qquad \forall q;
\end{equation}
Given Eq.(\ref{kkt_alpha}) and (\ref{kkt_beta}) and the power consumption is a non-negative value, we can re-write them in the following forms
\begin{equation}\label{proof_parameter}
\begin{aligned}
&{\alpha _u}^{*} = \frac{{\sum\limits_{k \in {\cal K}} {{{\widetilde f}^*_{u,k}}({l_u})} }}{{\eta \sum\limits_{k \in {\cal K}} {{p_{u,k}}^*} + {P_c}}} \\
&{\beta _u}^{*} = \frac{{{\omega _u}}}{{\eta \sum\limits_{k \in {\cal K}} {{p_{u,k}}^*}  + {P_c}}}
\end{aligned}
\end{equation}
Besides, it is clear that the previous system equations are also the KKT conditions for the problem (\ref{p-relaxed-unee-max}) given the parameters ${\alpha _u}={\alpha _u}^{*}$ and ${\beta _u}={\beta _u}^{*}$. On the other hand, we can follow the similar procedure and prove problem (\ref{p-relaxed-unee-max}) has the identical solution to problem (\ref{int-relaxed-unee-max}) when (\ref{proof_parameter}) holds. Therefore, Theorem.\ref{theory} is proved to be correct.

\ifCLASSOPTIONcaptionsoff
  \newpage
\fi

\end{document}